\let\proof\@undefined
\let\endproof\@undefined
\g@addto@macro{\@algocf@init}{\SetKwInOut{Parameter}{Parameters}}
\newtheorem{proposition}{Proposition}
\newtheorem{assumption}{Assumption}
\newtheorem{lemma}{Lemma}
\newtheorem{theorem}{Theorem}
\newtheorem{remark}{Remark}
\newtheorem{problem}{Problem}
\DeclareMathOperator*{\argmin}{arg\,min}
\def\BigRoman{\uppercase\expandafter{\romannumeral\number\count 255 }}
\def\Romannumeral{\afterassignment\LowRoman\count255=}
\title{Probabilistic Constraint Construction for Network-safe  Load Coordination
\thanks{This work was supported by U.S. National Science Foundation Award CNS-1837680.
The authors are with the Department of Electrical
Engineering and Computer Science, University of Michigan, Ann Arbor,
MI 48109 USA {\tt \{sunhoj,necmiye,jlmath\}@umich.edu}.}
}
\author{Sunho Jang, \IEEEmembership{Student Member, IEEE,} Necmiye Ozay, \IEEEmembership{Senior Member, IEEE,}\\ Johanna L. Mathieu, \IEEEmembership{Senior Member, IEEE}}
\begin{document}

{ }

\maketitle

\begin{abstract}
 Distributed Energy Resources (DERs) can provide balancing services to the grid, but their power variations might cause voltage and current constraint violations in the distribution network, compromising network safety. This could be avoided by including network constraints within DER control formulations, but the entities coordinating DERs (e.g., aggregators) may not have access to network information, which typically is known only to the utility. Therefore, it is challenging to develop network-safe DER control algorithms when the aggregator is not the utility; it requires these entities to coordinate with each other. In this paper, we develop an aggregator-utility coordination framework that enables network-safe control of thermostatically-controlled loads to provide frequency regulation. In our framework, the utility sends a network-safe constraint set on the aggregator's command without directly sharing any network information. We propose a constraint set construction algorithm that guarantees satisfaction of a chance constraint on network safety. Assuming monotonicity of the probability of network safety with respect to the aggregator's command, we leverage the bisection method to find the largest possible constraint set, providing maximum flexibility to the aggregator. Simulations show that, compared to two benchmark algorithms, the proposed approach provides a good balance between service quality and network safety.
\end{abstract}

\begin{IEEEkeywords}
chance constraints, distributed energy resources, load control, network safety, thermostatically-controlled loads
\end{IEEEkeywords}

\makenomenclature

\nomenclature{$n$}{Number of the network's nodes}
\nomenclature{$p_{\text{ref}} (t)$}{Reference signal at time $t$}
\nomenclature{$\mathcal{U} (t)$}{Constraint set on the command at time $t$}
\nomenclature{$p_{\text{agg}} (t)$}{Aggregate power of the TCLs at time $t$}
\nomenclature{$u(t)$}{Probabilistic command from the aggregator to the TCLs}
\nomenclature{$\bm{P} (t)$}{Real power consumptions from the loads}
\nomenclature{$\bm{Q} (t)$}{Reactive power consumptions from the loads}
\nomenclature{$\epsilon$}{Upper bound on the probability of safety violation}
\nomenclature{$\bm{P}^{\text{L}} (t)$}{Real power consumption from the uncontrollable loads}
\nomenclature{$\bm{Q}^{\text{L}} (t)$}{Reactive power consumption from the ncontrollable loads}
\nomenclature{$f_{\bm{P}^{\text{L}},\bm{Q}^{\text{L}}}^t$}{Joint probability density function of $(\bm{P}^{\text{L}} (t), \bm{Q}^{\text{L}} (t))$}
\nomenclature{$\underline{v}$}{Lower bound on the voltage of every node}
\nomenclature{$v_j $}{Voltage at node $j$ of the network}
\nomenclature{$\beta$}{Parameter for the confidence level}
\nomenclature{$n^{\text{TCL}}$}{Number of the TCLs}
\nomenclature{$n_{j}^{\text{TCL}}$}{Number of the TCLs at node $j$}
\nomenclature{$\theta^i (t)$}{Temperature of the $i$th TCL at time $t$}
\nomenclature{$m^i (t)$}{Mode of the $i$th TCL at time $t$}
\nomenclature{$\theta_{\text{a}}^i (t)$}{Ambient temperature of the $i$th TCL}
\nomenclature{$a_{\text{th}}^i$}{A parameter for the temperature evolution dynamics of the $i$th TCL}
\nomenclature{$\Delta t$}{Sampling time}
\nomenclature{$r_{\text{th}}^i$}{Thermal resistance of the $i$th TCL}
\nomenclature{$c_{\text{th}}^i$}{Thermal capacitance of the $i$th TCL}
\nomenclature{$p_{\text{tr}}^i$}{Energy transfer rate of the $i$th TCL}
\nomenclature{$p^i$}{Rated power consumption of all the TCLs}
\nomenclature{$\zeta^i$}{Coefficient of performance of the $i$th TCL}
\nomenclature{$q^i$}{Rated reactive power consumption}
\nomenclature{$\phi^i$}{Power factor of the TCLs}
\nomenclature{$\underline{\theta}^i$}{Lower bound of the temperature dead-band of $i$th TCL}
\nomenclature{$\overline{\theta}^i$}{Upper bound of the temperature dead-band of $i$th TCL}
\nomenclature{$\theta_{\text{s}}^i$}{Set point temperature}
\nomenclature{$\gamma^i$}{Width of the temperature dead-band}
\nomenclature{$z^i (t)$}{Random number sampled by the $i$th TCL for mode switching}
\nomenclature{$\bm{P}^{\text{T}} (t)$}{Real power consumption from the TCLs}
\nomenclature{$\bm{Q}^{\text{T}} (t)$}{Reactive power consumption from the TCLs}
\nomenclature{$\bm{N}^{\text{ON}} (t)$}{Number of the TCLs in ON mode}
\nomenclature{$\bm{N}^{\text{OFF}} (t)$}{Number of the TCLs in OFF mode}
\nomenclature{$\bm{S}^{\text{ON}} (t)$}{Number of the TCLs internally switched ON}
\nomenclature{$\bm{S}^{\text{OFF}} (t)$}{Number of the TCLs internally switched OFF}
\nomenclature{$w^{\text{ON}} (t)$}{The portion of the TCLs internally switched ON}
\nomenclature{$w^{\text{OFF}} (t)$}{The portion of internally s switched TCLs to OFF mode}
\nomenclature{$\bm{C}^{\text{ON}} (t)$}{Number of the TCLs switched ON by the aggregator's command}
\nomenclature{$\bm{C}^{\text{OFF}} (t)$}{Number of the TCLs switched OFF by the aggregator's command}
\nomenclature{$\bm{P}_{u} (t+1)$}{Real power at time $t+1$ under the command $u(t+1) = u$}
\nomenclature{$\bm{Q}_{u} (t+1)$}{Reactive power at time $t+1$ under the command $u(t+1) = u$}
\nomenclature{$p_{j}^{\text{b}}$}{Real power flowing at the branch whose receiving end is node $j$}
\nomenclature{$q_{j}^{\text{b}}$}{Reactive power flowing at the branch whose receiving end is node $j$}
\nomenclature{$r_{j}$}{Resistance of the branch whose receiving end is node $j$}
\nomenclature{$x_{j}$}{Reactance of the branch whose receiving end is node $j$}
\nomenclature{$e (j)$}{Parent node of node $j$}
\nomenclature{$c (j)$}{Set of child nodes of node $j$}
\nomenclature{$f_{v_j}$}{Solution of the branch flow equations corresponding to the voltage at node $j$}
\nomenclature{$\bm{V}_{u} (t+1)$}{Voltage at each node under $u (t+1) =u$}
\nomenclature{$X_{u} (t+1)$}{Network safety under $u(t+1) = u$}
\nomenclature{$\nu_{u} (t+1) $}{Probability of network safety under $u(t+1) = u$ at time $t+1$}
\nomenclature{$\mathbb{N}$}{Set of natural numbers}
\nomenclature{$[N]$}{Set of natural numbers from 1 to $N$}
\nomenclature{$[N]_{0}$}{Set of natural numbers from 0 to $N$}
\nomenclature{$\mathcal{B} ( n_{\text{s}}, \nu )$}{The binomial distribution with the number of trials $n_{\text{s}}$ and the success probability $\nu$}
\nomenclature{$\mathcal{F}_{\text{B}} ( x ; n_{\text{s}} , \nu )$}{Cumulative Density function of $\mathcal{B} ( n_{\text{s}}, \nu )$}
\nomenclature{$\mathcal{N} ( \mu, \sigma)$}{The normal distribution with mean $\mu$ and variance $\sigma$}
\nomenclature{$m_{c}$}{Mode switch function by aggregator's command}
\nomenclature{$v_0$}{Voltage at the substation.}
\nomenclature{$\hat{f}_{p_{j}^{\text{b}}}$}{Solution of simplified Distflow equations corresponding to the branch real power flow $p_j^{\text{b}}$}
\nomenclature{$\hat{f}_{q_{j}^{\text{b}}}$}{Solution of simplified Distflow equations corresponding to the branch reactive power flow $q_j^{\text{b}}$}
\nomenclature{$p_{j}^{\text{Ln}}$}{Nominal real power consumption  of the uncontrollable loads at node $j$}
\nomenclature{$q_{j}^{\text{Ln}}$}{Nominal reactive power consumption  of the uncontrollable loads at node $j$}
\nomenclature{$\bm{A}(u(t))$}{Transition matrix of bin-model under command $u(t)$}
\nomenclature{$\overline{p}_j$}{Average real power of the TCLs at node $j$}
\nomenclature{$\overline{q}_{j}$}{Average reactive power of the TCLs at node $j$}


\section{Introduction}
\IEEEPARstart{A}{s} the amount of intermittent renewable generation is rapidly growing, it is becoming more difficult to rely solely on the conventional ways of balancing power systems. One emerging solution is to leverage  Distributed Energy Resources (DERs), such as thermostatically-controlled loads (TCLs), batteries, and electric vehicles, to provide grid services. By doing so, they can improve the reliability, and reduce the operating cost and environmental impact of power systems. However, DERs coordinated to provide balancing services might cause issues in the distribution network, such as under/over-voltages, over-current violations, and transformer overheating, compromising network safety.
 
When the distribution network operator (i.e., the utility) coordinates DERs to provide grid services it can adopt a centralized algorithm that explicitly manages distribution network constraints, e.g., the algorithms provided in \cite{dall2017optimal,bernstein2019real,vrettos2013combined}. However, in competitive U.S. electricity markets it is becoming more likely that third-party (i.e., non-utility) DER aggregators will take on this role.  Unfortunately, the aggregator does not have access to detailed distribution network information typically known only to the utility, and so it is unable to directly determine how its actions would affect the distribution network. This challenge has already been recognized by the US Federal Energy Regulatory Commission (FERC)~\cite{ferc2018notc}.
 
Thus, there is a need for coordination between the aggregator and the utility to ensure network-safe operation of DERs. The recent FERC Order No. 2222~\cite{ferc2222} provided some guidance on the development of operational coordination architectures between DER aggregators, utilities, and market coordinators; however, it is still unclear how these architectures will evolve and which architecture is ``best." Beyond ensuring network safety, coordination architectures should also 1) ensure that each entity's private information (e.g., sensitive network information held by the utility, proprietary DER coordination strategies held by the aggregator, and private DER state information held by the DERs' end-users) is not shared with the other entities and 2) communication between the entities is minimal for compatibility with current communications infrastructure and/or to reduce the cost of any newly required infrastructure. Furthermore, architectures need to specify coordination protocols on different timescales, for example, 1) for operational planning such that the aggregator can determine its offer for balancing services, and 2) for real-time control in case network conditions differ significantly from forecasts and aggregator actions need to be curtailed.

In this paper, we propose an aggregator-utility coordination framework for a collection of TCLs to provide balancing services like frequency regulation while ensuring distribution network-safety with high probability. We focus on real-time coordination, specifically, a setting in which an aggregator is already committed to provide a certain amount of balancing services, but real-time distribution network conditions require curtailment of those services. In our framework, the utility sends the aggregator a one-step ahead constraint set on the aggregator's control input, which guarantees the satisfaction of a chance constraint on network safety with a certain confidence level. This method leverages estimation from Monte Carlo simulation and the bisection method to provide the largest possible constraint set to maximize the network-safe TCL flexibility. To achieve light communication requirements, the aggregator control algorithm assumes the TCLs all respond to the same scalar control input. This constrains the aggregator's degrees-of-freedom but also makes it possible for the utility to define a simple constraint set on the control input.

Previous work, e.g.~\cite{mathieu2012state,bashash2012modeling, zhang2013aggregated,tindemans2015decentralized}, has proposed strategies to control aggregations of TCLs, such as air conditioners and water heaters, to provide balancing services in ways that are non-disruptive to end-users.  TCLs have inherent thermal energy storage capacity and non-disruptiveness can be achieved, e.g., by keeping internal temperatures inside a narrow temperature dead-band. However, network safety was not considered in the above papers. Some work has developed network-safe control algorithms for TCLs coordinated by third-party aggregators. Ref.~\cite{ross2019coordination} proposes both a utility-centric and an aggregator-centric coordination framework, differentiated by which entity ultimately sends control commands to the TCLs. That paper and~\cite{ross2021strategies} develop utility-centric strategies wherein the utility blocks aggregator's commands that would otherwise cause network constraint violations. In contrast, our proposed approach would be considered aggregator-centric.

Aggregator-centric network-safe DER coordination could be achieved through (convex) inner approximation of safe operating regions~\cite{lee2021robust,nguyen2018constructing,nazir2021grid}, which could be computed by the utility and sent to the aggregator as constraints on the net DER power deviations at each node. Research from Australia refers to these nodal constraints as operating envelopes \cite{petrou2020operating,yi2022fair,russell2022stochastic}. Ref.~\cite{comden2022secure} proposes an optimization problem to obtain a hyper-rectangular constraint set on the net power consumption of controllable DERs at each node in order to satisfy chance constraints on the voltage at each node. However, these approaches all require constraints to be applied at each node, rather than applying a constraint on aggregate power deviations by DERs located across a network. Ref.~\cite{ross2020method} proposes a method to constrain the norm of the power deviations across all nodes in the network, but requires significant computation to compute the constraint. Assuming an aggregate power deviation constraint exists, our previous work~\cite{jang2021large} develops an aggregator-centric TCL coordination algorithm using formal methods, but does not develop an approach to obtain the constraint, and the solutions are very conservative. 

In contrast to this previous work, this paper makes the following contributions: 1) we develop a new aggregator-centric approach to enable network-safe control of TCLs for balancing services; 2) assuming a simple control scheme that leverages a scalar control input to coordinate TCLs to provide balancing services (the aggregator's algorithm), we develop an approach to constrain the control input to satisfy a chance constraint on network safety (the utility's algorithm); and 3)~we demonstrate our approach in simulation and compare its performance to two benchmark approaches. In contrast to past work on network-safe control that assumes the system is deterministic, e.g.,~\cite{ross2020method}, here we consider uncertainty in the power consumption of non-participating loads. Furthermore, in contrast to~\cite{jang2021large}, we assume the aggregator has incomplete information about the TCLs to reduce communication requirements and preserve some level of privacy. Lastly, though
some past work leveraged chance constraints to develop network-safe DER coordination approaches, e.g.,~\cite{baker2016distribution,dall2017chance,hassan2018chance,ayyagari2017chance,hassan2018optimal,chen2021combining,comden2022secure}, these papers all assume that the controller has detailed distribution network information (enabling the formulation of a chance-constrained optimal power flow problem), which is inconsistent with our utility-aggregator coordination framework.
 
The organization of the paper is as follows. Section~\ref{sec:framework_problem} introduces the coordination framework and problem of interest. Section~\ref{sec:aggregator_TCLs} explains the aggregator's control approach and Section~\ref{sec:utility} details the proposed constraint construction algorithm used by the utility to achieve network safety at a high level of probability. Section~\ref{sec:numerical_experiment} presents the results of a case study comparing the proposed approach to two benchmarks. The appendix includes proofs of two of the theorems.

\emph{Notation:}
$\mathbb{N}$, $[N]$, $[N]_0$ are the set of natural numbers, $\{ 1,\ldots,N \}$, and $\{ 0, 1, \ldots, N \}$, respectively. The $n$-dimensional Euclidean space is $\mathbb{R}^n$. The $j$th element of the vector $\bm{y}$ is $y_{j}$.
Binomial distribution $\mathcal{B} (n_{\text{s}}, \nu)$ has $n_{\text{s}}$ trials, each with success probability $\nu$, and cumulative density function (cdf) $\mathcal{F}_{\text{B}} (x ; n_{\text{s}},\nu)$. $\mathcal{N} (\mu, \sigma^2 )$ is the normal distribution with mean $\mu$ and variance $\sigma^2$. Function $\mathbbm{1}(A)$ is 1 if $A$ is true, and 0 otherwise. All random variables are capitalized English letters, e.g., $X$, with realizations denoted $\tilde{x}$ and estimates/approximates denoted $\hat{x}$. All other variables are denoted by symbols other than capitalized English letters. Vectors and matrices are bolded.

\section{Framework \& Problem of interest} \label{sec:framework_problem}

We consider a framework in which a utility and aggregator coordinate to provide network-safe grid balancing services, e.g., frequency regulation, by aggregations of TCLs. TCLs switch ON/OFF to maintain temperature within a dead-band. We focus on real-time coordination, i.e., we assume that the aggregator has already participated in the ancillary services market and committed balancing service capacity to the independent system operator (ISO). The amount of balancing service capacity offered by the aggregator was based on forecasts of the capabilities of the TCLs and the network state. However, the real-time network state differs significantly from its forecasts and so the committed balancing service capacity must be curtailed to avoid distribution network constraint violations. This could happen when load consumption and/or renewable power injections are significantly different from forecasts and the network is operating close to its limits.

We assume that the following coordination steps occur at each discrete time step $t$, where the length of each time step is $\Delta t$. The coordination scheme is shown in Fig.~\ref{fig:entities_relationship}.
    \begin{enumerate}
        \item The aggregator receives a constraint set $\mathcal{U} (t)$ from the utility and a reference signal $p_{\text{ref}} (t)$ (e.g., a scaled and biased frequency regulation signal) from the ISO.
        \item The aggregator determines the control command $u(t) \in \mathcal{U}(t)$ and broadcasts the same command to all TCLs.
        \item Each TCL maintains or switches its ON/OFF mode based on its temperature and the aggregator's command $u(t)$.
        \item The utility observes the real and reactive power consumption at each network node $\bm{p} (t)$ and $\bm{q} (t)$, and obtains some information from the aggregator (described below). Then, it constructs a one-step ahead constraint set $ \mathcal{U}(t+1) $ and sends it to the aggregator. (And go back to step 1.)
    \end{enumerate}
 
    \begin{figure}[t]
        \centering
        \includegraphics[width=0.95\linewidth]{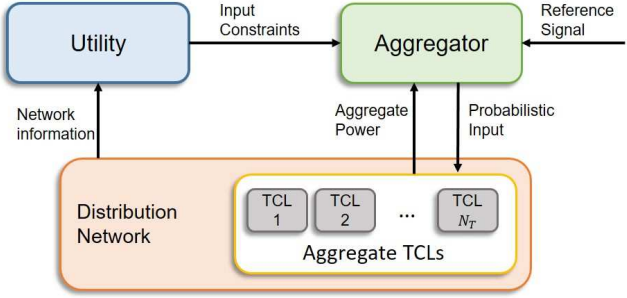}
        \caption{Coordination between the aggregator, utility, and the TCLs.}\vspace{-0.5cm}
        \label{fig:entities_relationship}
    \end{figure}
    
The aggregator's goal is to select $u(t)$ to maximize the quality of grid balancing services. This means that the aggregator should choose a command $u(t)$ that is likely to adjust the aggregate power of the TCLs to match the reference signal $p_{\text{ref}} ( t )$ as closely as possible. Here, we assume the aggregator's command $u(t)$ is a real scalar in the range $ [ -1 ,1 ] $ and is interpreted by each TCL as the probability it should switch modes; the details of how it switches are given in Section~\ref{sec:aggregator_TCLs}. TCL coordination through probabilistic switching has been considered in previous work e.g.,~\cite{mathieu2012state,ross2019coordination}. An advantage of this type of command is that it only needs simple broadcast communication infrastructure. However, it does not allow the aggregator to directly adjust the power consumption of individual TCLs, which means that the aggregator has a low degree-of-freedom in control.
    
Since the aggregator does not have detailed distribution network information and cannot evaluate how its command would affect the network, the utility sends a one-step ahead constraint set $\mathcal{U} (t+1)$ on the aggregator's command $u(t+1)$. This set $\mathcal{U} (t+1)$ is designed such that, if $u(t+1) \in \mathcal{U} (t+1)$, then probability of network safety is over a desired value $1-\epsilon$. We propose a method for the utility to compute $\mathcal{U} (t+1)$ in Section~\ref{sec:utility}, which is the main contribution of this work. To do this, the utility leverages:

1) Real-time data from household smart meters to obtain the real and reactive power consumption at each node,  $\bm{p} (t)$ and $\bm{q} (t)$. We recognize that in practice most utilities do not currently gather smart meter data in real-time, but this is possible with most existing smart meters and could be enabled through reconfiguration of their settings.

2) Forecasts of the probability distributions of the one-step ahead real and reactive power consumption of non-participating loads at each node, $\bm{P}^{\text{L}} (t+1)$ and $\bm{Q}^{\text{L}} (t+1)$. We assume that these distributions are estimated using historical and real-time data from household smart meters, and leveraging a disaggregation technique~\cite{hart1992nonintrusive} to separate the power consumption of the TCLs from that of the non-participating loads. We assume that $ \bm{P}^{\text{L}} (t) $ and $\bm{Q}^{\text{L}} (t)$ are correlated and $f_{\bm{P}^{\text{L}}, \bm{Q}^{\text{L}}}(t)$ is their joint probability density function (pdf).

3) Some real-time TCL information from the aggregator that is necessary for constraint set computation. This should be minimal to protect end-user privacy. In our framework, the aggregator provides the one-step ahead estimated fractions of TCLs that will be outside of their temperature dead-band and switched OFF-to-ON and ON-to-OFF by their thermostats, $\bm{\hat{w}}^{\text{ON}} (t+1)$ and $\bm{\hat{w}}^{\text{OFF}} (t+1)$. Details on how this information is used are provided in Section~\ref{sec:probsafety}. 

In this paper, for the sake of simplicity, we define network safety in terms of under-voltage violations. Specifically, we say that the network is safe if there are no under-voltage violations, and unsafe if there are any violations. The approach can be easily extended to include over-voltage violations and other distribution network constraint violations. The formal statement problem is as follows.
    \begin{problem} \label{prob:problem_statement}
     Given the desired safety probability $1-\epsilon$, the real-time real and reactive power consumption at each node $\bm{p}(t)$ and $\bm{q} (t)$, the joint pdfs of the uncontrollable loads $f_{\bm{P}^{\mathrm{L}},\bm{Q}^{\mathrm{L}}}(t)$, $f_{\bm{P}^{\mathrm{L}},\bm{Q}^{\mathrm{L}}}(t+1)$, and the fractions of TCLs that are outside of their dead-band $\bm{w}^{\mathrm{ON}} (t+1)$, $\bm{w}^{\mathrm{OFF}} (t+1)$, find a one-step ahead constraint set $\mathcal{U} (t+1)$ such that the following chance constraint holds if $u(t+1) \in \mathcal{U} (t+1)$,
       \begin{equation} \label{eq:chance_constrained_constraint}
           \mathrm{Pr} \left( \min_{j \in [n]} V_{j} (t+1) \geq \underline{v} \right ) \geq 1 - \epsilon,
       \end{equation}
       where $\underline{v}$ is the lower bound on each of the nodal voltages $V_j$  and $n$ is the number of nodes other than the substation.
    \end{problem}
    To solve this problem, we define the one-time step ahead voltage at each node $ V_{j} ( t+1 )$ as a random variable whose distribution depends on the command $u (t+1)$; the details are explained in Section~\ref{sec:utility}. It is difficult to obtain a closed-form expression for the probability distribution of each $V_{j} (t+1) $. Therefore, our approach leverages Monte Carlo simulation to estimate the left side of \eqref{eq:chance_constrained_constraint} given a one-step ahead command $u (t+1)$. Since estimation from sampling leads to error, we find a constraint set $\mathcal{U} (t+1)$ with a confidence level over a desired level $1 - \beta$ rather than giving an exact solution.
    
    \section{Aggregator's Control Approach} \label{sec:aggregator_TCLs}
    
    In this section, we explain how the TCLs operate under the aggregator's command $u(t)$. For simplicity, we assume that all participating TCLs are cooling TCLs (e.g., air conditioners), though the approach also applies to heating TCLs. We denote by $\bm{n}^{\text{TCL}}$ the vector whose element $n_j^{\text{TCL}}$ is the number of participating TCLs at node $j$, and by $n^{\text{TCL}} := \bm{1}^\top \bm{n}^{\text{TCL}}$ the total number of participating TCLs, which satisfies $\sum_{j=1}^{n} n_{j}^{\text{TCL}} = n^{\text{TCL}}$. The internal temperature of the $i$th TCL at time $t$ is denoted by $\theta^{i} (t)$ and its mode is denoted by $m^i (t)$, which is 0 when it is OFF, and 1 when it is ON. The temperature dynamics of the $i$th TCL follow the affine model from \cite{sonderegger1978dynamic}, \begin{equation} \label{eq:TCLtempdyns}
        \theta^{i} (t+1) = a_{\text{th}}^i \theta^i (t) + \left ( 1 - a_{\text{th}}^i \right ) \left ( \theta_a^{i} (t) + r_{\text{th}}^i p_{\text{tr}}^{i} m^{i} (t) \right ),
    \end{equation}
where $\theta_a^i (t) $ is the ambient temperature and $a_{\text{th}}^i = \exp ( - \Delta t / (r_{\text{th}}^i c_{\text{th}}^i))$ is a parameter computed from the thermal resistance $r_{\text{th}}^i$ and capacitance $c_{\text{th}}^i$ of the $i$th TCL. Also, $p_{\text{tr}}^{i}$ is the energy transfer rate of the $i$th TCL, which is negative for a cooling TCL. The power consumption of the $i$th TCL in the ON mode is $p^{i} := p_{\text{tr}}^{i} / \zeta^{i} $ where $\zeta^i$ is the coefficient of performance; the power consumption in the OFF mode is $0$. We assume that the reactive power consumption of the $i$th TCL is $q^i := \omega^i p^i$, where $\omega^i$ is a positive constant.  
The aggregate real power consumption of the TCLs is $p_{\text{agg}} (t) := \sum_{i=1}^{n^{\text{TCL}}} p^i m^i (t)$. 
    
Each TCL has a temperature range $[ \underline{\theta}^i, \overline{\theta}^i ]$ within which its internal temperature should always be; this range is called the temperature dead-band. The temperature set-point, which is set by its end-user, $\theta_{\text{s}}^{i} := ( \underline{\theta}^i + \overline{\theta}^i ) / 2$ is the middle point of the dead-band. Whenever a TCL's internal temperature reaches or goes beyond the boundary of its dead-band it switches its mode to go back into the dead-band.

At each time step $t$, the aggregator determines its command $u(t)$ and broadcasts it to all participating TCLs. TCLs within their dead-bands interpret this command as the desired probability of OFF TCLs to switch ON when $u(t)>0$, and the desired probability of ON TCLs to switch OFF when $u(t)<0$. To determine whether or not to switch, each TCL draws a random number $z^{i} (t)$ from the uniform distribution on the interval $[ 0 , 1 )$ and compares it to the command $u (t)$. If it is OFF and $ z^{i} (t) \leq u (t)$, then it switches ON. If it is ON and $ z^{i} (t) \leq -u (t)$, then it switches OFF. 

In summary, the mode of the $i$th TCL is
    \begin{equation} \label{eq:modechange} 
    m^i (t) = 
        \begin{cases}
            1 & \text{if }  \theta^i (t) \geq \overline{\theta}^i \\
            0 & \text{if }  \theta^i (t) \leq \underline{\theta}^i  \\
            m_{c} ( z^i(t), u(t)) & \text{otherwise,}
        \end{cases}
    \end{equation}
    where $m_c ( z^i (t), u(t) )$ is equal to
    \begin{equation} \label{eq:switchbyinput}
    \begin{cases}
        1 & \text{if } m^i (t-1) = 0 \text{ and } z^i(t) \leq u(t) \\
        0 & \text{if } m^i (t-1) = 1 \text{ and } z^i(t) \leq -u(t) \\
        m^i (t-1) & \text{otherwise}.\notag
    \end{cases}
    \end{equation}
     
     Note that, 
     when positive (negative) $u(t)$ is broadcast to the TCLs, the fraction of the OFF (ON) TCLs within their dead-bands that are switched is approximately $u(t)$ ($-u(t)$). Thus, $|u(t)|$ can be interpreted by the aggregator as the \textit{ratio} of the power consumption increase (decrease) compared to the maximal increase (decrease). Therefore, even though the power consumption of each TCL is not directly controlled by the aggregator, the aggregator can manipulate $p_{\text{agg}} (t)$ by selecting the $u(t) \in \mathcal{U} (t)$ that is likely to adjust $p_{\text{agg}} (t)$ to match the reference signal $ p_{\text{ref}} (t)$ as closely as possible, i.e.,
     \begin{equation}
     \label{eqn:uopt}
     u_{\text{opt}} ( t ) = \argmin_{ u \in \mathcal{U} (t) } \left | \mathbb{E} \left [ P_{\text{agg}} ( t ) \right ]  - p_{\text{ref}} ( t ) \right |,
\end{equation}
where $\mathcal{U} (t)$ is provided by the utility.

 \section{Utility's constraint construction method} \label{sec:utility}

 As mentioned in Section~\ref{sec:framework_problem}, the utility computes a one-step ahead constraint set $\mathcal{U} (t+1)$, which should be a solution to Problem~\ref{prob:problem_statement}. This requires the utility to be able to evaluate how the command $u(t+1)$ would affect the probability of network safety. In this section, 
we first show how the voltage at each node is modeled as a random variable. For ease of exposition, we consider only balanced radial distribution networks. Then, we derive the probability of network safety (i.e., the probability that no under-voltage violations happen) as a function of the command $u (t+1) = u$. 

Next, we show how to verify whether or not the chance constraint~\eqref{eq:chance_constrained_constraint} is satisfied under $u (t+1) = u$ with a desired confidence level, and how the utility can construct $\mathcal{U} (t+1)$ to ensure \eqref{eq:chance_constrained_constraint} is satisfied. We introduce a theorem establishing a confidence interval for the success probability of a Bernoulli random variable using Monte Carlo simulations. Using this result, we leverage the bisection method to find the largest upper bound on $u (t+1)$ that guarantees \eqref{eq:chance_constrained_constraint} with a desired confidence level. The largest upper bound gives the aggregator the greatest possible flexibility in determining its command. 
   
   \subsection{Modeling the probability of network safety}
   \label{sec:probsafety}
   
     We denote the real and reactive power consumption of participating TCLs across all nodes by $\bm{P}^{\text{T}} (t)$ and $\bm{Q}^{\text{T}} (t) \in \mathbb{R}^n$. 
     The utility approximates the nodal values as
    \begin{equation} \label{eq:TCLpowerapprox}
        \begin{aligned}
            P_{j}^{\text{T}} (t) & \approx \overline{p}_{j} N_{j}^{\text{ON}} (t), \enspace Q_{j}^{\text{T}} (t) \approx \overline{q}_{j} N_{j}^{\text{ON}} (t) \quad \forall j \in [n],
        \end{aligned}
    \end{equation}
    where $N_j^{\text{ON}} (t)$ and $N_j^{\text{OFF}} (t)$ are the number of ON and OFF TCLs at node $j$, and $\overline{p}_{j}$ and $\overline{q}_{j}$ are the average real and reactive power rating (i.e., the ON-mode consumption) of the TCLs at node $j$. We additionally define diagonal matrices $\Xi_{p} $ and $\Xi_{q} \in \mathbb{R}^{n \times n}$ whose $j$th diagonal elements are $\overline{p}_j$ and $\overline{q}_j$, respectively. Then, $\bm{P}^{\text{T}} (t)=\Xi_{p} \bm{N}^{\text{ON}} (t)$ and $\bm{Q}^{\text{T}} (t)=\Xi_{q} \bm{N}^{\text{ON}} (t)$, and the total real and reactive power consumption across all nodes is $\bm{P} (t) = \Xi_{p} \bm{N}^{\text{ON}} (t) + \bm{P}^{\text{L}} (t)$ and $\bm{Q} (t) = \Xi_{q} \bm{N}^{\text{ON}} (t) + \bm{Q}^{\text{L}} (t).$
    
 We first show how the one-step ahead number of ON TCLs $\bm{N}^{\text{ON}} (t+1) \in \mathbb{R}^n$ is modeled as a random variable under the command $u (t+1) = u$. The number $\bm{N}^{\text{ON}} (t+1)$ depends upon how many TCLs are switched both by their thermostat (i.e., the first and second cases of \eqref{eq:modechange}) and by the aggregator's command (i.e., the third case of \eqref{eq:modechange}). The number of TCLs at each node $j$ that will be switched ON, OFF by their thermostats is
 \begin{equation} \label{eq:numofIntswitchTCLs}
     \begin{aligned}
         S_{j}^{\text{ON}} ( t+1 ) & =  w_{j}^{\text{ON}} ( t+1 ) N_{j}^{\text{OFF}} ( t),  \\ 
         S_{j}^{\text{OFF}} ( t+1 ) & =  w_{j}^{\text{OFF}} ( t+1 ) N_{j}^{\text{ON}} (t),
     \end{aligned}
 \end{equation} 
     where, as defined in Section~
     \ref{sec:framework_problem}, $w_{j}^{\mathrm{ON}} ( t+1 )$ is the one-step ahead fraction of OFF TCLs that will be switched ON and $w_{j}^{\mathrm{OFF}} ( t+1 )$ is the one-step ahead fraction of ON TCLs that will be switched OFF by their thermostats at bus $j$.
     We assume that the aggregator estimates $ w_{j}^{\text{ON}} ( t + 1 )$ and $w_{j}^{\text{OFF}} ( t + 1 )$ using a model of the aggregate TCL dynamics and sends the estimated values $\hat{w}_{j}^{\text{ON}} ( t + 1 )$ and $\hat{w}_{j}^{\text{OFF}} ( t + 1 )$ to the utility, which corresponds to the TCL information illustrated in Fig.~\ref{fig:entities_relationship}. The utility uses these estimates to obtain realizations of $S_{j}^{\text{ON}} (t+1)$ and $S_{j}^{\text{OFF}} ( t+1 )$ via Monte Carlo simulation, which will be explained in Section~\ref{sec:bound_construction}. 

    According to \eqref{eq:modechange}, the numbers of TCLs at each node $j$ that will be switched ON and OFF by the aggregator's command follow binomial distributions, 
    \begin{equation} \label{eq:numofswitchTCLs}
        \begin{aligned}
             C_{u,j}^{\text{ON}}  (t+1) & \sim \mathcal{B} \left ( N_{j}^{\text{OFF}} (t) - S_{j}^{\text{ON}} (t+1), u^{+} \right ), \\
            C_{u,j}^{\text{OFF}} (t+1) & \sim \mathcal{B} \left ( N_{j}^{\text{ON}} (t) - S_{j}^{\text{OFF}} (t+1), u^{-} \right ),
        \end{aligned}
    \end{equation}
    where $u^{+} := \max ( u,0) $ and $u^{-} = \max (-u,0)$.
    Therefore, the number of ON TCLs given the command $u(t+1) = u $ is
    \begin{equation} \label{eq:nextnOn}
        \begin{aligned}
            \bm{N}_{u}^{\text{ON}} (t+1)  =  \;\bm{N}^{\text{ON}} (t) & +  \bm{S}^{\text{ON}} (t+1)  \\
             - \bm{S}^{\text{OFF}} (t+1) & + \bm{C}_{u}^{\text{ON}} (t+1) - \bm{C}_{u}^{\text{OFF}} (t+1). 
        \end{aligned}
    \end{equation}
     Since the distributions of $C_{u,j}^{\text{ON}}  (t+1)$ and $C_{u,j}^{\text{OFF}} (t+1)$ depend on $u$, the real and reactive power consumption across all nodes $\bm{P} (t+1)$ and $\bm{Q} (t+1)$ also depend on $u$. Therefore, from now on, we denote these random variables under the one-step ahead command $ u (t+1) =  u $ as $\bm{P}_{u} (t+1)$ and $\bm{Q}_{u} (t+1)$.
    
    The next step is to model the one-step ahead voltage $V_{j} (t+1)$ at each node $j$ as a random variable. Suppose that $v_j$ is the voltage magnitude at node $j$; $p_{j}^{\text{b}}$ and $q_{j}^{\text{b}}$ are the real and reactive power flowing through the branch whose receiving end is node~$j$; and the resistance and reactance of the branch are $r_j > 0 $ and $x_j > 0$, respectively. Then, the DistFlow equations~\cite{baran1989optimal} corresponding to a single-phase equivalent model of a radial three-phase balanced network are
    \begin{equation} \label{eq:branch_flow_equations}
        \begin{aligned}
            p_{j}^{\text{b}} = & \sum_{k \in c(j)} p_{k}^{\text{b}} + p_j + r_j | i_j^{\text{b}} | 
            \\
            q_{j}^{\text{b}} = & \sum_{k \in c(j)} q_{k}^{\text{b}} + q_j + x_j | i_j^{\text{b}} | 
            \\
            v_{j}^2  = & \; v_{e(j)}^2 - 2 (r_j p_{j}^{\text{b}} +  x_j q_{j}^{\text{b}}) + (r_j^2 + x_j^2 ) | i_j^{\text{b}} |, 
        \end{aligned}
    \end{equation}
    where $e(j)$ and $c(j)$ are the parent node and set of child nodes of node $j$, respectively, and $| i_j^{\text{b}} | =  ((p_{j}^{\text{b}})^2 + (q_{j}^{\text{b}})^2 ) / v_{e(j)}^2 $ is the magnitude of the current flowing through the branch whose receiving end is node $j$. Given real and reactive power consumption $\bm{p}$ and $\bm{q} \in \mathbb{R}^n$ and substation voltage $v_0$, we let $f_{v_{j}} ( \bm{p},\bm{q}, v_0)$ be the voltage solution of \eqref{eq:branch_flow_equations}, which can be obtained by various algorithms such as Backward-Forward Sweep~\cite{kersting2018distribution}. Then, the one-step ahead voltage at node $j$ under the command $ u(t+1) = u$ is $V_{u,j} (t+1)= f_{v_{j}} (\bm{P}_{u} (t+1), \bm{Q}_{u} (t+1), v_0)$. Note that we cannot obtain an explicit pdf of $V_{u,j} (t+1)$  since there is no closed-form solution of $f_{v_j}$. Instead, we can obtain a realization of $V_{u,j} (t+1)$ by solving \eqref{eq:branch_flow_equations} for a set of realizations $\tilde{\bm{p}}$ and $\tilde{\bm{q}}$ of $\bm{P}_{u} (t+1)$ and $\bm{Q}_{u} (t+1)$.
    
    
    Finally, we define a Bernoulli random variable that indicates whether or not an under-voltage violation exists, 
    \begin{equation} \label{eq:safety_indicator}
        X_{u} (t+1) = \mathbbm{1} \left ( \text{min}_{j \in [n]} V_{u,j} ( t+1) \geq \underline{v} \right ),
    \end{equation}
   whose success probability $
   \nu_{u} ( t+1 ) = \mathrm{Pr} \left ( X_{u} (t+1) =1  \right )$ corresponds to the one-step ahead probability of network safety under command $u(t+1)=u$. Thus, the utility's problem is to find a set $\mathcal{U} (t+1)$ such that, for any $u \in \mathcal{U} (t+1)$, $\nu_{u} (t+1)$ is larger than $1-\epsilon$ with confidence level over $1-\beta$. The solution to this problem is explained in the next section.
    
\subsection{Probabilistically-safe set construction} \label{sec:bound_construction}

In this section, we first present a theorem on computing a confidence interval for the success probability of a Bernoulli random variable via a Monte Carlo simulation. Based on this theorem, we then show how the utility can test whether a command $u(t+1)=u$ is probabilistically safe and how this test procedure can be used to construct the set $\mathcal{U} (t+1)$ of all commands that satisfy the chance constraint.  

\begin{theorem} \label{thm:confidence_interval_condition}
        Suppose that $X^{(1)},\ldots,X^{( n_{\text{s}})}$ are i.i.d. samples of a random variable $X$ following Bernoulli distribution $ B(1,\nu)$ for a positive $\nu$ (i.e. $\mathrm{Pr} (X^{(i)} = 1) = \nu$, $\mathrm{Pr} (X^{(i)} = 0 ) = 1 - \nu$ for any $i \in [ n_{\text{s}} ] $). Let $ M_{ n_{\text{s}} } := \sum_{i=1}^{ n_{\text{s}} } X^{(i)} / n_{\text{s}} $ be the estimator of $\nu$, and $\tilde{m}_{n_{\text{s}}}$ a realization of $M_{n_{\text{s}}}$. If the following inequalities hold,
        \begin{align}
           & \tilde{m}_{n_{\text{s}}} > 1 - \epsilon \label{eq:theorem1_cond1} \\
           &  n_{\text{s}} > \ln \left ( \frac{1}{\beta} \right ) \frac{1}{  \left ( \tilde{m}_{n_{\text{s}}} + \epsilon \right ) \ln (\tilde{m}_{n_{\text{s}}} + \epsilon) - ( \tilde{m}_{n_{\text{s}}}+ \epsilon - 1)}, \label{eq:theorem1_cond2}
        \end{align}
        then $ [1-\epsilon,1] $ is a confidence interval for the success probability $\nu$ of $X$ with the confidence level over $1-\beta$.
\end{theorem}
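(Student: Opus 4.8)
The plan is to make precise what ``$[1-\epsilon,1]$ is a confidence interval for $\nu$ at level $1-\beta$'' means and then reduce it to a one–sided tail bound. Since $[1-\epsilon,1]$ trivially contains $\nu$ whenever $\nu\ge 1-\epsilon$, it suffices to show that for every $\nu$ with $0<\nu<1-\epsilon$, the probability — over the draw of $X^{(1)},\dots,X^{(n_{\text{s}})}$ — that $M_{n_{\text{s}}}$ takes a value for which \emph{both} \eqref{eq:theorem1_cond1} and \eqref{eq:theorem1_cond2} hold is at most $\beta$. That probability is exactly the chance of a (wrong) assertion $\nu\in[1-\epsilon,1]$, so bounding it by $\beta$ delivers the stated $1-\beta$ confidence level.

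First I would rewrite the two conditions as a single threshold event on $M_{n_{\text{s}}}$. Let $g(a):=a\ln a-a+1$ be the function in the denominator of \eqref{eq:theorem1_cond2}; it is continuous with $g(1)=0$ and $g'(a)=\ln a$, so $g$ strictly increases on $[1,\infty)$, and there is a unique $a^\star\in(1,\infty)$ with $n_{\text{s}}\,g(a^\star)=\ln(1/\beta)$. Condition \eqref{eq:theorem1_cond1} places $M_{n_{\text{s}}}+\epsilon$ on the increasing branch $(1,\infty)$ of $g$, and on that branch \eqref{eq:theorem1_cond2}, which reads $g(M_{n_{\text{s}}}+\epsilon)>\ln(1/\beta)/n_{\text{s}}=g(a^\star)$, is equivalent to $M_{n_{\text{s}}}>a^\star-\epsilon=:t^\star$. (Condition \eqref{eq:theorem1_cond1} is needed precisely to discard the spurious small values of $M_{n_{\text{s}}}$ that also satisfy \eqref{eq:theorem1_cond2} through the decreasing branch of $g$ on $(0,1)$.) Since $a^\star>1$ we have $t^\star>1-\epsilon>\nu$, so what remains is to bound the upper–tail probability $\mathrm{Pr}_\nu(M_{n_{\text{s}}}>t^\star)$.

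The technical heart is the bound: for every $t>1-\epsilon$ and every $\nu\le 1-\epsilon$,
\[
  \mathrm{Pr}_\nu\!\left(M_{n_{\text{s}}}>t\right)\le \exp\!\big(-n_{\text{s}}\,g(t+\epsilon)\big).
\]
I would prove this with a Chernoff argument tuned so the Legendre transform lands exactly on $g(t+\epsilon)$. For $\theta>0$, $\mathrm{Pr}_\nu(\sum_i X^{(i)}>n_{\text{s}}t)\le e^{-\theta n_{\text{s}}t}\,\mathbb{E}[e^{\theta X^{(1)}}]^{n_{\text{s}}}$, and since $\nu\le 1-\epsilon$ and $e^\theta>1$ the per–sample MGF obeys $\mathbb{E}[e^{\theta X^{(1)}}]=1+\nu(e^\theta-1)\le \epsilon+(1-\epsilon)e^\theta$. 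The key elementary step is $\epsilon+(1-\epsilon)e^\theta\le \exp(e^\theta-1-\epsilon\theta)$, which I would verify by setting $s=e^\theta\ge 1$ and checking that $\phi(s):=(s-1)-\epsilon\ln s-\ln(\epsilon+(1-\epsilon)s)$ satisfies $\phi(1)=0$, $\phi'(1)=0$, and $\phi''(s)=\epsilon/s^2+(1-\epsilon)^2/(\epsilon+(1-\epsilon)s)^2>0$, so $\phi\ge 0$. Substituting back gives $\mathrm{Pr}_\nu(\sum_i X^{(i)}>n_{\text{s}}t)\le \exp\!\big(n_{\text{s}}(e^\theta-1-\theta(t+\epsilon))\big)$; the exponent is convex in $\theta$ and minimized at $\theta=\ln(t+\epsilon)>0$ (using $t+\epsilon>1$), where it equals $-n_{\text{s}}\,g(t+\epsilon)$.

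Combining the pieces, applying the tail bound at $t=t^\star$ yields $\mathrm{Pr}_\nu(M_{n_{\text{s}}}>t^\star)\le \exp(-n_{\text{s}}\,g(a^\star))=\exp(-\ln(1/\beta))=\beta$, which is exactly the bound on the false–assertion probability. I expect the one genuinely non-obvious choice — the main obstacle — to be the per–sample MGF bound: one must weaken $\mathbb{E}[e^{\theta X^{(1)}}]$ \emph{not} to the relative-entropy / ``$\mathcal{B}(n_{\text{s}},1-\epsilon)$'' form (whose rate $D(t\,\|\,1-\epsilon)$ has no clean closed form, which would be awkward for the bisection in Section~\ref{sec:bound_construction}) but to the shifted form $\exp(e^\theta-1-\epsilon\theta)$, whose convex conjugate is the simple, easily invertible $g(t+\epsilon)$; everything else is a convexity check and bookkeeping. (Equivalently, one could couple $\sum_i X^{(i)}$ stochastically to a $\mathcal{B}(n_{\text{s}},1-\epsilon)$ variable, apply a textbook multiplicative Chernoff lower-tail bound, and weaken it using $g(1+\delta)\le\delta^2/2$.)
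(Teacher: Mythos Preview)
Your proof is correct and reaches exactly the same rate function $g(a)=a\ln a-(a-1)$ as the paper, but the route is genuinely different. The paper quotes the multiplicative Chernoff bound (Mitzenmacher--Upfal, Theorem~4.1) in the form $\Pr_\nu(M_{n_{\text s}}\ge\nu+c)\le\exp\!\big(n_{\text s}(c-(\nu+c)\ln(1+c/\nu))\big)$, then shows by the derivative calculation $\partial_\nu[\cdot]=c/\nu-\ln(1+c/\nu)\ge0$ that the exponent is \emph{increasing in $\nu$}, so the loosest bound occurs at $\nu=1$, giving $\exp(n_{\text s}h_2(c))$ with $h_2(c)=c-(1+c)\ln(1+c)=-g(1+c)$; it then specializes $c=\tilde m_{n_{\text s}}-(1-\epsilon)$. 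You instead make the frequentist semantics explicit (false--assertion probability for $\nu<1-\epsilon$), rewrite \eqref{eq:theorem1_cond1}--\eqref{eq:theorem1_cond2} as a single threshold event $M_{n_{\text s}}>t^\star$, and derive the tail bound from scratch by first using $\nu\le1-\epsilon$ on the per-sample MGF and then applying the ``shifted-Poisson'' weakening $\epsilon+(1-\epsilon)e^\theta\le\exp(e^\theta-1-\epsilon\theta)$. With $t^\star+\epsilon=1+c^\star$ the two bounds coincide, so neither argument is sharper. What each buys: the paper's proof is shorter (one citation plus one monotonicity check) but leaves the data-dependent choice of $c$ and the confidence-interval interpretation somewhat implicit; your version is self-contained, clarifies exactly why condition~\eqref{eq:theorem1_cond1} is needed (to discard the decreasing branch of $g$), and makes the coverage statement precise. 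One small slip in your closing parenthetical: the inequality ``$g(1+\delta)\le\delta^2/2$'' points the wrong way for bounding a tail probability (you would need a \emph{lower} bound on $g$), and in any case the coupling route would call for an upper-tail, not a lower-tail, Chernoff bound on the $\mathcal B(n_{\text s},1-\epsilon)$ variable; this does not affect your main argument.
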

The proof is given in Appendix~\ref{sec:thm1proof}. In our problem, $\tilde{m}_{n_{\text{s}}}$ is a realization of an estimator of the success probability $\nu_{u} (t+1)$ obtained from realizations of $X_u (t+1)$. This theorem implies that, if both $\tilde{m}_{n_{\text{s}}}$ and the number of samples $n_{\text{s}}$ are sufficiently large, then $\nu_{u} (t+1)$ is larger than $1-\epsilon$. Thus, to verify whether or not $\nu_u (t+1)$ is larger than $1-\epsilon$, the utility can obtain a number of realizations of $X_{u} (t+1)$ and check if inequalities \eqref{eq:theorem1_cond1} and \eqref{eq:theorem1_cond2} hold.
    
Now, we introduce the procedure the utility uses to obtain realizations of $X_u (t+1)$ given some $u \in [ -1, 1 ]$. The utility first computes the probability mass function (pmf) of $\bm{N}^{\text{ON}} (t)$ given the observed $\bm{p} (t)$ and $\bm{q} (t)$ as follows,
\begin{equation} \label{eq:nOnTCLsdist}
    \begin{aligned}
        & \mathrm{Pr}  \Big ( \bm{N}^{\text{ON}} (t) = \bm{n}^{\text{ON}} \; \big | \; ( \bm{P} (t)  = \bm{p} (t) )\cap ( \bm{Q} (t) = \bm{q} (t) ) \Big ) = \\
        & \mathrm{Pr} \Big ( \left( \bm{P}^{\text{L}} (t) = \bm{p} (t) - \Xi_{p} \bm{n}^{\text{ON}} \right )\cap  \left ( \bm{Q}^{\text{L}} (t) = \bm{q} (t) - \Xi_{q} \bm{n}^{\text{ON}} \right ) \\
        & \qquad \qquad \qquad \qquad \qquad \quad | \; \left( \bm{P} (t)  = \bm{p} (t) \right ) \cap \left ( \bm{Q} (t) = \bm{q} (t) \right ) \Big ) \\ 
        & \quad = \frac{ f_{\bm{P}^{\text{L}},\bm{Q}^{\text{L}}}  \left ( \bm{p} (t) - \Xi_{p} \bm{n}^{\text{ON}} , \bm{q} (t) - \Xi_{q} \bm{n}^{\text{ON}} \right )}{ \sum_{\bm{n} \in \mathbb{N}^{\text{ON}}} f_{\bm{P}^{\text{L}},\bm{Q}^{\text{L}}} \left ( \bm{p} (t) - \Xi_{p} \bm{n} , \bm{q} (t) - \Xi_{q} \bm{n} \right ) },
    \end{aligned}
\end{equation}
    where $\mathbb{N}^{\text{ON}} := \left \{ \bm{n}^{\text{ON}} \; | \; n_{j}^{\text{ON}} \in [ n_{j}^{\text{TCL}} ]_0 \enspace \forall j \in [n] \right \}$ is the set of all possible vectors for $\bm{N}^{\text{ON}} (t)$. Then, the utility obtains a realization $\tilde{x}_u (t+1)$ of $X_{u} (t+1)$ through the following sampling procedure, illustrated in Fig.~\ref{fig:test_procedure}.
    \begin{enumerate}[Step 1)]
    \item a. Obtain a realization $\tilde{\bm{n}}^{\text{ON}} (t)$ of $\bm{N}^{\text{ON}} (t)$ by sampling from its pmf derived through \eqref{eq:nOnTCLsdist}, and compute $\tilde{\bm{n}}^{\text{OFF}} (t) = \bm{n}^{\text{TCL}} - \tilde{\bm{n}}^{\text{ON}} (t)$.
    \, b. Obtain realizations $\tilde{\bm{p}}^{\text{L}} (t+1)$ and $\tilde{\bm{q}}^{\text{L}} (t+1)$ of $\bm{P}^{\text{L}} (t+1)$ and $\bm{Q}^{\text{L}} (t+1)$ by sampling  from $f_{\bm{P}^{\text{L}},\bm{Q}^{\text{L}}} ( t+1 )$.
        
        \item Obtain realizations $ \tilde{\bm{s}}^{\text{ON}}$ and $\tilde{\bm{s}}^{\text{OFF}}$ of $\bm{S}^{\text{ON}} (t+1)$ and $\bm{S}^{\text{OFF}} (t+1)$ by computing their elements per \eqref{eq:numofIntswitchTCLs} as
        \begin{equation*}
            \begin{aligned}
            & \tilde{s}_{j}^{\text{ON}} (t+1)  = \hat{w}_{j}^{\text{ON}} (t+1) \tilde{n}_{j}^{\text{OFF}} (t) \enspace \forall j \in [n] \\ & \tilde{s}_{j}^{\text{OFF}} ( t+1) =  \hat{w}_{j}^{\text{OFF}} (t+1) \tilde{n}_{j}^{\text{ON}} (t) \enspace \forall j \in [n].
            \end{aligned}
        \end{equation*}
        
        \item Obtain realizations $\tilde{\bm{c}}_u^{\text{ON}} (t+1)$ and $\tilde{\bm{c}}_u^{\text{OFF}} (t+1)$ of $\bm{C}_{u}^{\text{ON}} (t+1)$ and $\bm{C}_{u}^{\text{OFF}} (t+1)$ by sampling their elements per \eqref{eq:numofswitchTCLs} from the binomial distributions $\mathcal{B} \left ( \tilde{n}_{j}^{\text{OFF}} (t) - \tilde{s}_{j}^{\text{ON}} (t+1), u^{+} \right )$ and $\mathcal{B} \left ( \tilde{n}_{j}^{\text{ON}} (t) - \tilde{s}_{j}^{\text{OFF}} ( t+1 ), u^{-} \right )$.
        
        \item Obtain realizations of $\bm{N}_u^{\text{ON}} (t+1)$, $\bm{P}_{u} (t+1)$, $\bm{Q}_{u} (t+1)$, $\bm{V}_{u} (t+1)$, and $X_u (t+1)$ as
        \begin{equation*}
            \begin{aligned}
                 & \tilde{\bm{n}}_u^{\text{ON}} ( t+1 )  = \tilde{\bm{n}}^{\text{ON}} (t) + \tilde{\bm{s}}^{\text{ON}} (t+1) - \tilde{\bm{s}}^{\text{OFF}} ( t+1) \\
                 & \qquad \qquad \qquad \qquad \qquad + \tilde{\bm{c}}_u^{\text{ON}} ( t+1) - \tilde{\bm{c}}_u^{\text{OFF}} ( t+1) \\ 
                 & \tilde{\bm{p}}_u ( t+1) = \tilde{\bm{p}}^{\text{L}} ( t+1) + \Xi_{p} \tilde{\bm{n}}_u^{\text{ON}} (t+1) \ \\
                 & \tilde{\bm{q}}_u (t+1) = \tilde{\bm{q}}^{\text{L}} ( t+1)  + \Xi_{q} \tilde{\bm{n}}_u^{\text{ON}} ( t+1) \\
                 & \tilde{v}_{u,j} ( t+1) = f_{v_j} ( \tilde{\bm{p}}_u ( t+1) ,  \tilde{\bm{q}}_u ( t+1), v_0) \enspace \forall j \in [n] \\
                 & \tilde{x}_u ( t+1) = \mathbbm{1} ( \text{min}_{j \in [n]} \tilde{v}_{u,j} ( t+1) \geq \underline{v} ). 
            \end{aligned}
        \end{equation*}
    \end{enumerate}
    
    \begin{figure}[!t]
        \centering
        \includegraphics[width=0.47\textwidth]{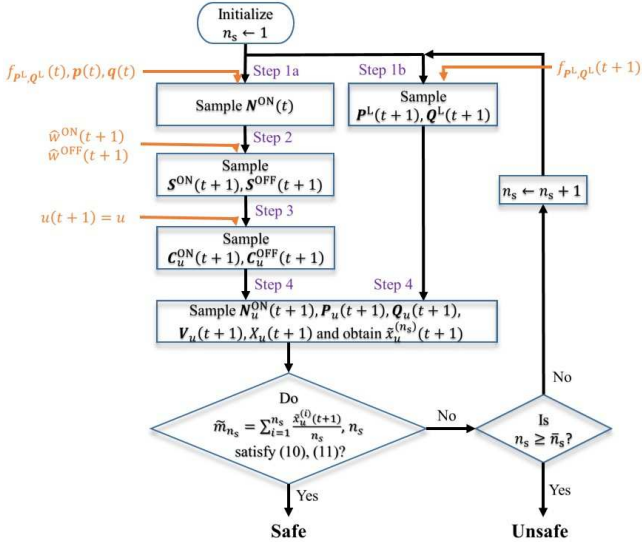}
        \vspace{-.5cm}
        \caption{Flowchart of the test procedure to check if a one-step ahead command $u (t+1) = u$ satisfies the chance constraint. The information required for each step is in orange.}\vspace{-0.5cm}
        \label{fig:test_procedure}
    \end{figure}

The utility can 
obtain multiple realizations of $X_{u} (t+1)$ by iterating this sampling procedure. Denote each realization $i$ of $X_u (t+1)$ as $\tilde{x}_u^{(i)} (t+1)$, where $i \in [ n_{\text{s}} ]$. In each iteration, the utility updates the realization of the estimator $\tilde{m}_{n_{\text{s}}} = \sum_{i=1}^{n_{\text{s}}} \tilde{x}_u^{(i)} (t+1) / n_{\text{s}}$ and checks if the inequalities \eqref{eq:theorem1_cond1}, \eqref{eq:theorem1_cond2} hold. If they do, $u (t+1) = u$ satisfies the chance constraint with confidence level over $1 - \beta$; otherwise, the utility continues to iterate until $n_{\text{s}}$ reaches some pre-determined upper bound $\overline{n}_{\text{s}}$, as shown in Fig.~\ref{fig:test_procedure}. 
    
    Next, we construct a one-step ahead constraint set $\mathcal{U} (t+1)$. We first make an assumption on the monotonicity of $\nu_{u} ( t+1 )$.
    \begin{assumption} \label{ass:prob_monotonicity}
        The one-step ahead probability of network safety $ \nu_{u} (t+1)  $  monotonically decreases with respect to $u$.
    \end{assumption}
    
The intuition behind this assumption is that the real and reactive power consumption at each node is likely to increase as $u$ increases, which is also likely to lead to a voltage decrease at every node. This assumption will be justified in Section~\ref{subsec:monotonicity_justification}. Under this assumption, the following holds.
    \begin{theorem} \label{thm:solution}
        Suppose that Assumption~\ref{ass:prob_monotonicity} holds and let $\tilde{x}_{\overline{u}}^{(1)} ( t+1) , \ldots , \tilde{x}_{\overline{u}}^{(n_{\text{s}})} ( t+1) $ be $n_{\text{s}}$ realizations of $X_{\overline{u}} (t+1)$ for a command $\overline{u} \in [-1,1]$. If $n_{\text{s}}$ and $\tilde{m}_{n_{\text{s}}} = \sum_{i=1}^{n_{\text{s}}} \tilde{x}_{\overline{u}}^{(i)} ( t+1 )  / n_{\text{s}}$ satisfy \eqref{eq:theorem1_cond1} and \eqref{eq:theorem1_cond2}, then $\mathcal{U} (t+1) = [-1 , \overline{u} ]$ is a solution to the Problem~\ref{prob:problem_statement} with confidence level over $1-\beta$.
    \end{theorem}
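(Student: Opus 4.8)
The plan is to chain three observations: the confidence guarantee of Theorem~\ref{thm:confidence_interval_condition} applied at the \emph{single} command $\overline{u}$, the \emph{deterministic} monotonicity of $u \mapsto \nu_u(t+1)$ supplied by Assumption~\ref{ass:prob_monotonicity}, and the fact that $\nu_u(t+1)$ is, by construction, exactly the left-hand side of the chance constraint~\eqref{eq:chance_constrained_constraint} evaluated at $u(t+1)=u$.

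First I would instantiate Theorem~\ref{thm:confidence_interval_condition} with $X = X_{\overline{u}}(t+1)$, success probability $\nu = \nu_{\overline{u}}(t+1) = \mathrm{Pr}(X_{\overline{u}}(t+1)=1)$, and the given realizations $\tilde{x}_{\overline{u}}^{(1)}(t+1),\dots,\tilde{x}_{\overline{u}}^{(n_{\text{s}})}(t+1)$ with empirical mean $\tilde{m}_{n_{\text{s}}}$. Because, by hypothesis, $n_{\text{s}}$ and $\tilde{m}_{n_{\text{s}}}$ satisfy \eqref{eq:theorem1_cond1}--\eqref{eq:theorem1_cond2}, Theorem~\ref{thm:confidence_interval_condition} yields that $[1-\epsilon,1]$ is a confidence interval for $\nu_{\overline{u}}(t+1)$ at confidence level over $1-\beta$; equivalently, the probability (over the $n_{\text{s}}$ i.i.d.\ Monte Carlo draws used in the test of Section~\ref{sec:bound_construction}) of the event that the test passes yet $\nu_{\overline{u}}(t+1) < 1-\epsilon$ is at most $\beta$. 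Since the set $[-1,\overline{u}]$ is returned precisely when the test passes, we conclude $\nu_{\overline{u}}(t+1) \ge 1-\epsilon$ with confidence over $1-\beta$. (Positivity of $\nu_{\overline u}(t+1)$, required by Theorem~\ref{thm:confidence_interval_condition}, is automatic once \eqref{eq:theorem1_cond1} can hold, as $\tilde m_{n_{\text{s}}} > 1-\epsilon > 0$ forces at least one observed success.)

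Next I would invoke Assumption~\ref{ass:prob_monotonicity}: $u \mapsto \nu_u(t+1)$ is non-increasing, so $\nu_u(t+1) \ge \nu_{\overline{u}}(t+1)$ for every $u \in [-1,\overline{u}]$. This inequality is a statement about the true safety probabilities, not about the samples, so it holds surely and does not spend any of the $\beta$ budget; combining it with the previous step gives $\nu_u(t+1) \ge 1-\epsilon$ simultaneously for all $u \in [-1,\overline{u}]$ with confidence over $1-\beta$. Finally, recalling from \eqref{eq:safety_indicator} that $\nu_u(t+1) = \mathrm{Pr}\big(\min_{j\in[n]} V_{u,j}(t+1) \ge \underline{v}\big)$ and that $V_{u,j}(t+1)$ is the one-step-ahead nodal voltage under $u(t+1)=u$, the bound $\nu_u(t+1) \ge 1-\epsilon$ is literally the chance constraint~\eqref{eq:chance_constrained_constraint}. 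Hence every $u \in \mathcal{U}(t+1) = [-1,\overline{u}]$ satisfies~\eqref{eq:chance_constrained_constraint}, so $\mathcal{U}(t+1)$ solves Problem~\ref{prob:problem_statement} with confidence level over $1-\beta$.

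I expect the only subtle point to be the bookkeeping of the confidence level: one must make explicit that a single Monte Carlo test is carried out (at $\overline{u}$ only), so no union bound over the continuum of commands $u \in [-1,\overline{u}]$ is incurred, and that the $1-\beta$ guarantee propagates from the endpoint $\overline{u}$ to the entire interval solely through the almost-sure monotonicity inequality of Assumption~\ref{ass:prob_monotonicity}. Everything else is direct substitution into the definitions of $\nu_u(t+1)$, $X_u(t+1)$, and Problem~\ref{prob:problem_statement}; no additional estimation or calculation is needed.
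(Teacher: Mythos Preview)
Your proposal is correct and follows essentially the same argument as the paper's own proof: apply Theorem~\ref{thm:confidence_interval_condition} at the single endpoint $\overline{u}$ to obtain $\nu_{\overline{u}}(t+1)\ge 1-\epsilon$ with confidence over $1-\beta$, then use Assumption~\ref{ass:prob_monotonicity} to propagate this to every $u\in[-1,\overline{u}]$. Your additional remarks about positivity of $\nu_{\overline{u}}(t+1)$ and the absence of a union bound are accurate elaborations but not required by the paper's terse proof.
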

    \begin{proof}  
    By Theorem~\ref{thm:confidence_interval_condition}, the interval $[1-\epsilon,1]$ is a confidence interval for $\nu_{\overline{u}} ( t+1 )$ with confidence level over $1-\beta$. Also, under Assumption~\ref{ass:prob_monotonicity}, $ \nu_{u} (t+1) \geq \nu_{\overline{u}} (t+1)$ holds for any $u \in \mathcal{U} (t+1) = [-1,\overline{u}]$. Thus, $\nu_{u} (t+1) $ is greater than or equal to $1 - \epsilon$ for any $u \in \mathcal{U} (t+1)$ with confidence level over $1 - \beta$.
    \end{proof}
    
    This theorem means that, if the one-step ahead probability of network safety $\nu_{\overline{u}} ( t+1 )$ under the command $ u (t+1) = \overline{u}$ is greater than or equal to the desired safety probability, then any less aggressive command in the range $[-1, \overline{u}]$ also satisfies the chance constraint. Therefore, a solution to Problem~\ref{prob:problem_statement} is the interval $[-1, \overline{u}]$, where $\overline{u}$ passes the test procedure in Fig.~\ref{fig:test_procedure}. 
        
    The choice of probabilistically-safe set $\mathcal{U} (t+1)$ is not unique. A larger $\mathcal{U} (t+1)$ gives more flexibility to the aggregator, potentially improves the quality of balancing services, and reduces the conservativeness of our approach. Therefore, the utility should find the largest possible $\overline{u}$ that passes the test procedure. This can be achieved using the bisection method \cite{burden2015numerical}, starting with $\overline{u}=1$.

    \begin{remark} \label{remark:over_voltage}
        To restrict the probability of over-voltage violations, we can also apply the monotonicity assumption; the probability of over-voltage violations increases as the command $u$ decreases. In this case, we can use the bisection method to obtain a lower bound on $u (t+1)$. Then, the utility can send both a lower and upper bound on $u (t+1)$ to restrict the probability of over- and under-voltage violations.
    \end{remark}
    
     \begin{remark} \label{remark:errorfromest}
     Since the utility approximates $\bm{P}^{\mathrm{T}}$ and $\bm{Q}^{\mathrm{T}}$ in \eqref{eq:TCLpowerapprox} and uses estimates of $\bm{w}^{\mathrm{ON}}$ and $\bm{w}^{\mathrm{OFF}}$ in Step 2 of the sampling procedure, Theorem~\ref{thm:solution} holds only if those approximations/estimates are accurate. We will justify the use of these approximations/estimations through simulation in Section~\ref{sec:numerical_experiment}.
    \end{remark}

    \subsection{Justification of Assumption~\ref{ass:prob_monotonicity}} \label{subsec:monotonicity_justification}

    In this section, we justify Assumption~\ref{ass:prob_monotonicity} by showing that an approximation of $\nu_{u} (t+1)$ is a monotonically decreasing function with respect to $u$. 
    We consider the LinDistFlow equations~\cite{baran1989network}, which drop the nonlinear terms of~\eqref{eq:branch_flow_equations}, i.e.,
    \begin{equation}
    \label{eq:approx_branch_flow_equation}
    \begin{aligned} 
            \hat{p}_{j}^{\text{b}} = & \sum_{k \in c(j)} \hat{p}_{k}^{\text{b}} + p_j, \quad \hat{q}_{j}^{\text{b}} = \sum_{k \in c(j)} \hat{q}_{k}^{\text{b}} + q_j  \\
            \hat{v}_{j}^2 = & \; \hat{v}_{e(j)}^2 - 2 (r_j \hat{p}_{j}^{\text{b}} +  x_j \hat{q}_{j}^{\text{b}}), 
    \end{aligned}
    \end{equation}
    where variables with hats correspond to approximations of the original DistFlow variables.
    Let $ \hat{f}_{v_{j}} (\bm{p}, \bm{q} , v_0)$ be the voltage solution of \eqref{eq:approx_branch_flow_equation}, i.e., $ \hat{V}_{u,j} (t+1) := \hat{f}_{v_{j}} (\bm{P}_u (t+1), \bm{Q}_u (t+1), v_0)$ is the approximate voltage at node $j$. Also, let 
     \begin{equation} 
        \hat{X}_{u} (t+1) = \mathbbm{1} \left ( \text{min}_{j \in [n]} \hat{V}_{u,j} ( t+1 )  \geq \underline{v} \right ),
    \end{equation}
    whose success probability  $\hat{\nu}_{u} (t+1) := \mathrm{Pr} ( \hat{X}_{u} (t+1) =1 ) $ approximates $\nu_{u} (t+1)$. To show that $\hat{\nu}_{u} (t+1)$ is decreasing with respect to $u$, we start with a proposition.
    \begin{proposition} \label{ass:voltage_monotonicity}
        Suppose that $\bm{p}^{(1)},\bm{p}^{(2)} \in \mathbb{R}^n$ and $\bm{q}^{(1)},\bm{q}^{(2)} \in \mathbb{R}^n$ are different instances of real and reactive power consumption where $p_{j}^{(1)} \leq p_{j}^{(2)} $ and $q_{j}^{(1)} \leq q_{j}^{(2)} \, \forall \, j \in [n]$. Then, $\hat{f}_{v_j} (\bm{p}^{(1)},\bm{q}^{(1)}, v_0) \geq \hat{f}_{v_j} (\bm{p}^{(2)} , \bm{q}^{(2)} , v_0)$ for all $j \in [n]$.  
    \end{proposition}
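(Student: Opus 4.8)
The plan is to exploit the fact that, unlike the full DistFlow system, the LinDistFlow equations \eqref{eq:approx_branch_flow_equation} are explicitly solvable on a radial (tree) network: the branch-flow recursion decouples from the voltage recursion, each branch flow is simply the sum of the downstream consumptions, and each squared voltage is an affine function of those flows with coefficients of known sign. Monotonicity then drops out of two nested inductions over the tree.

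First I would analyze the branch power flows. Recursing the first two equations of \eqref{eq:approx_branch_flow_equation} from the leaves toward the substation gives $\hat{p}_j^{\mathrm{b}} = \sum_{k \in \mathrm{desc}(j)} p_k$ and $\hat{q}_j^{\mathrm{b}} = \sum_{k \in \mathrm{desc}(j)} q_k$, where $\mathrm{desc}(j)$ denotes node $j$ together with all of its descendants. Formally this is an induction on the tree: at a leaf node $\hat{p}_j^{\mathrm{b}} = p_j$, and at an internal node $\hat{p}_j^{\mathrm{b}} = p_j + \sum_{k \in c(j)} \hat{p}_k^{\mathrm{b}}$, so the inequality $\hat{p}_j^{\mathrm{b},(1)} \le \hat{p}_j^{\mathrm{b},(2)}$ is preserved because we add to each side a non-negative increment ($p_j^{(1)} \le p_j^{(2)}$) and a sum of children branch flows that is ordered the same way by the induction hypothesis. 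The identical argument yields $\hat{q}_j^{\mathrm{b},(1)} \le \hat{q}_j^{\mathrm{b},(2)}$ for all $j \in [n]$.

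Next I would run an induction in the opposite direction, from the substation toward the leaves, on the squared voltages. The base case is $\hat{v}_0^2 = v_0^2$, which is identical in both instances. For the inductive step, the third equation of \eqref{eq:approx_branch_flow_equation} gives $\hat{v}_j^2 = \hat{v}_{e(j)}^2 - 2(r_j \hat{p}_j^{\mathrm{b}} + x_j \hat{q}_j^{\mathrm{b}})$; by the induction hypothesis $\hat{v}_{e(j)}^{2,(1)} \ge \hat{v}_{e(j)}^{2,(2)}$, and since $r_j, x_j > 0$ together with $\hat{p}_j^{\mathrm{b},(1)} \le \hat{p}_j^{\mathrm{b},(2)}$ and $\hat{q}_j^{\mathrm{b},(1)} \le \hat{q}_j^{\mathrm{b},(2)}$, the subtracted term is no larger in instance $(1)$, hence $\hat{v}_j^{2,(1)} \ge \hat{v}_j^{2,(2)}$. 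Finally, since voltage magnitudes are non-negative, taking square roots preserves the inequality, so $\hat{f}_{v_j}(\bm{p}^{(1)}, \bm{q}^{(1)}, v_0) = \hat{v}_j^{(1)} \ge \hat{v}_j^{(2)} = \hat{f}_{v_j}(\bm{p}^{(2)}, \bm{q}^{(2)}, v_0)$ for every $j \in [n]$.

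I do not expect a genuine obstacle here; the only points needing a little care are (i) making the tree recursions precise, e.g.\ fixing a topological ordering of the nodes so that both inductions are well-founded, and (ii) noting that the squared voltages remain non-negative so that the final square-root step is legitimate, which is implicit in the model being well posed. As an alternative to the inductions, one can simply write the closed form $\hat{v}_j^2 = v_0^2 - 2 \sum_{\ell \in \mathcal{P}(j)} (r_\ell \hat{p}_\ell^{\mathrm{b}} + x_\ell \hat{q}_\ell^{\mathrm{b}})$, where $\mathcal{P}(j)$ is the set of branches on the path from the substation to node $j$, and read off the monotonicity directly from the non-negativity of $r_\ell, x_\ell$ and of the (zero-one) coefficients expressing $\hat{p}_\ell^{\mathrm{b}}$ and $\hat{q}_\ell^{\mathrm{b}}$ in terms of the consumptions.
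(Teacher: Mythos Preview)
Your proposal is correct and matches the paper's argument in substance. The paper opts directly for the closed-form route you list as your alternative---writing $\hat{f}_{p_j^{\mathrm{b}}}=\sum_{k\in d(j)}p_k$, $\hat{f}_{q_j^{\mathrm{b}}}=\sum_{k\in d(j)}q_k$, and $\hat{f}_{v_j}^2=v_0^2-2\sum_{k\in a(j)}(r_k\hat{f}_{p_k^{\mathrm{b}}}+x_k\hat{f}_{q_k^{\mathrm{b}}})$ and reading off the monotonicity from the signs of $r_k,x_k$---rather than spelling out the two tree inductions, but these are the same proof.
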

    
    \begin{proof}   
    First let $\hat{f}_{p_{j}^{\text{b}}} (\bm{p},\bm{q},v_0)$ and $\hat{f}_{q_{j}^{\text{b}}} (\bm{p},\bm{q},v_0)$ be the solutions of \eqref{eq:approx_branch_flow_equation} corresponding to $p_{j}^{\text{b}}$ and $q_{j}^{\text{b}}$ when the real and reactive power consumption at each node are $\bm{p}$ and $\bm{q}$, and the substation voltage is $v_0$. Then, for all $j \in [n]$ \cite{baran1989network}
    \begin{equation} \label{eq:prop1_eq1}
    \begin{aligned}
       & \hat{f}_{p_{j}^{\text{b}}} ( \bm{p}, \bm{q}, v_0) = \sum_{k \in d(j) } p_{k}, \quad \hat{f}_{q_{j}^{\text{b}}} (\bm{p},\bm{q}, v_0) = \sum_{k \in d(j) } q_{k} \\
        & \hat{f}_{v_j}^2   ( \bm{p}, \bm{q}, v_0 ) = v_0^2   - 2 \sum_{k \in a(j) } \Big ( r_k \hat{f}_{p_{k}^{\text{b}}} ( \bm{p}, \bm{q}, v_0 ) \\
        & \qquad \qquad \qquad  \qquad \qquad \qquad \quad + x_k \hat{f}_{q_{k}^{\text{b}}} ( \bm{p}, \bm{q}, v_0 ) \Big ),
    \end{aligned}
    \end{equation}
    where $d(j) := c(j) \cup \{ j \}$ is the set of indices of all descendants of node $j$ including itself, and $a(j)$ is the set of indices of all ancestors of node $j$ including itself. Hence, $\hat{f}_{p_{j}^{\text{b}}} (\bm{p} ,\bm{q},v_0)$ and $ \hat{f}_{q_{j}^{\text{b}}} ( \bm{p} ,\bm{q}, v_0)$ are increasing as $p_k$ and $q_k$ increase for all $k \in [n]$, and $ \hat{f}_{p_{j}^{\text{b}}} (\bm{p}^{(1)},\bm{q}^{(1)}, v_0) \leq \hat{f}_{p_{j}^{\text{b}}} (\bm{p}^{(2)}, \bm{q}^{(2)}, v_0)$ and $ \hat{f}_{q_{j}^{\text{b}}} (\bm{p}^{(1)},\bm{q}^{(1)}, v_0) \leq \hat{f}_{q_{j}^{\text{b}}} (\bm{p}^{(2)}, \bm{q}^{(2)}, v_0)$  for all $j \in [n]$. Also, since all $r_k$ and $x_k$ are positive,  $\hat{f}_{v_j}  ( \bm{p}, \bm{q}, v_0 )$ is decreasing as $ \hat{f}_{p_{k}^{\text{b}}} ( \bm{p}, \bm{q}, v_0)$ and $ \hat{f}_{q_{k}^{\text{b}}} (\bm{p},\bm{q}, v_0)$ increase for all $ k \in [n]$. Therefore, $\hat{f}_{v_j} (\bm{p}^{(1)},\bm{q}^{(1)}, v_0) \geq \hat{f}_{v_j} (\bm{p}^{(2)} , \bm{q}^{(2)}, v_0) \, \forall \, j \in [n]$.
    \end{proof}

    This proposition states that $\hat{f}_{v_j} ( \bm{p}, \bm{q}, v_0 )$ monotonically decreases as the real and reactive power consumption $p_j$ and $q_j$ at every node increase for all $j \in [n]$. Since the one-step ahead real and reactive power consumption of the TCLs at each node are likely to increase as $u$ increases (recall that in Section~\ref{sec:aggregator_TCLs} we made the realistic assumption that TCLs have constant lagging power factors, and so their real and reactive power consumption change in the same direction), this proposition implies that the probability of under-voltage violations increases as $u$ increases. This is stated in the following theorem.
    \begin{theorem} \label{thm:prob_monotonicity}
        The approximate probability of network safety $\hat{\nu}_{u} (t+1)$ under the one-step ahead command $ u$ is a monotonically decreasing function of $u$. 
    \end{theorem}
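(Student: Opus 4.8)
The plan is to prove the stronger statement that $\hat{X}_{u}(t+1)$ is stochastically decreasing in $u$: there is a coupling of the family $\{\hat{X}_{u}(t+1)\}_{u}$ under which $u_1 \le u_2$ implies $\hat{X}_{u_1}(t+1) \ge \hat{X}_{u_2}(t+1)$ almost surely. Taking expectations then gives $\hat{\nu}_{u_1}(t+1) \ge \hat{\nu}_{u_2}(t+1)$, which is exactly the claim, since any coupling preserves the marginal law of each $\hat{X}_{u}(t+1)$. The structural fact that makes this work is that $u$ enters the model only through the binomial counts $\bm{C}_{u}^{\mathrm{ON}}(t+1)$ and $\bm{C}_{u}^{\mathrm{OFF}}(t+1)$ in~\eqref{eq:nextnOn}, whose success probabilities $u^{+}=\max(u,0)$ and $u^{-}=\max(-u,0)$ are respectively nondecreasing and nonincreasing in $u$.

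First I would fix $u_1 \le u_2$ in $[-1,1]$ and construct a common probability space: draw one realization of $\bm{N}^{\mathrm{ON}}(t)$ (hence of $\bm{N}^{\mathrm{OFF}}(t)$), of $\bm{P}^{\mathrm{L}}(t+1),\bm{Q}^{\mathrm{L}}(t+1)$, and of $\bm{S}^{\mathrm{ON}}(t+1),\bm{S}^{\mathrm{OFF}}(t+1)$ as in Steps~1--2 of the sampling procedure, and reuse these same realizations for both $u_1$ and $u_2$. Conditional on them, at each node $j$ I realize $C_{u,j}^{\mathrm{ON}}(t+1)$ and $C_{u,j}^{\mathrm{OFF}}(t+1)$ through the thermostat mechanism of~\eqref{eq:switchbyinput}: assign i.i.d.\ $\mathrm{Unif}[0,1)$ variables to the $N_j^{\mathrm{OFF}}(t)-S_j^{\mathrm{ON}}(t+1)$ eligible OFF TCLs and to the $N_j^{\mathrm{ON}}(t)-S_j^{\mathrm{OFF}}(t+1)$ eligible ON TCLs, and declare a TCL to contribute to $C_{u,j}^{\mathrm{ON}}$ (resp.\ $C_{u,j}^{\mathrm{OFF}}$) iff its draw is $\le u^{+}$ (resp.\ $\le u^{-}$), with the same draws used for $u_1$ and $u_2$. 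Because $u_1^{+}\le u_2^{+}$ this forces $\bm{C}_{u_1}^{\mathrm{ON}}(t+1)\le\bm{C}_{u_2}^{\mathrm{ON}}(t+1)$ componentwise, and because $u_1^{-}\ge u_2^{-}$ it forces $\bm{C}_{u_1}^{\mathrm{OFF}}(t+1)\ge\bm{C}_{u_2}^{\mathrm{OFF}}(t+1)$ componentwise, both almost surely; substituting into~\eqref{eq:nextnOn} yields $\bm{N}_{u_1}^{\mathrm{ON}}(t+1)\le\bm{N}_{u_2}^{\mathrm{ON}}(t+1)$ componentwise a.s.

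Next, since $\Xi_p$ and $\Xi_q$ are diagonal with positive entries $\overline{p}_j,\overline{q}_j$ (the ON-mode power ratings) and $\bm{P}^{\mathrm{L}}(t+1),\bm{Q}^{\mathrm{L}}(t+1)$ are shared, this gives $\bm{P}_{u_1}(t+1)\le\bm{P}_{u_2}(t+1)$ and $\bm{Q}_{u_1}(t+1)\le\bm{Q}_{u_2}(t+1)$ componentwise a.s. Proposition~\ref{ass:voltage_monotonicity} then yields $\hat{V}_{u_1,j}(t+1)=\hat{f}_{v_j}(\bm{P}_{u_1}(t+1),\bm{Q}_{u_1}(t+1),v_0)\ge\hat{f}_{v_j}(\bm{P}_{u_2}(t+1),\bm{Q}_{u_2}(t+1),v_0)=\hat{V}_{u_2,j}(t+1)$ for every $j\in[n]$ a.s., hence $\min_{j\in[n]}\hat{V}_{u_1,j}(t+1)\ge\min_{j\in[n]}\hat{V}_{u_2,j}(t+1)$, and since $a\mapsto\mathbbm{1}(a\ge\underline{v})$ is nondecreasing we conclude $\hat{X}_{u_1}(t+1)\ge\hat{X}_{u_2}(t+1)$ a.s. Taking expectations gives $\hat{\nu}_{u_1}(t+1)\ge\hat{\nu}_{u_2}(t+1)$, so $\hat{\nu}_{u}(t+1)$ is monotonically decreasing in $u$.

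I expect the main obstacle to be the bookkeeping of the coupling: one must (i) treat the randomness of the binomial \emph{sample sizes} $N_j^{\mathrm{OFF}}(t)-S_j^{\mathrm{ON}}(t+1)$ and $N_j^{\mathrm{ON}}(t)-S_j^{\mathrm{OFF}}(t+1)$ by conditioning, since they are shared across $u_1,u_2$ so that monotonicity in the probability parameter alone suffices, and (ii) correctly handle the two regimes $u\ge 0$ and $u\le 0$ — for $u\le 0$ the shared uniform draws govern ON$\to$OFF switches, and reversing the thresholding order there is exactly what makes $\bm{C}_{u}^{\mathrm{OFF}}(t+1)$ decreasing in $u$. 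A minor caveat to flag is that the argument uses $\overline{p}_j,\overline{q}_j>0$ (more ON TCLs means strictly more nodal real and reactive power), which is implicit in the approximation~\eqref{eq:TCLpowerapprox} and the constant lagging-power-factor assumption of Section~\ref{sec:aggregator_TCLs}.
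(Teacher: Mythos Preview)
Your proof is correct and takes a genuinely different route from the paper. The paper first rewrites the LinDistFlow safety event $\bigwedge_j\{\hat V_{u,j}(t+1)\ge\underline v\}$ as a lower-polyhedral constraint $\{g_j(\bm C_u)\le h_j(\bm R)\}_j$ on $\bm C_u=\bm C_u^{\mathrm{ON}}-\bm C_u^{\mathrm{OFF}}$, conditions on $\bm R$, and then invokes two auxiliary results (Lemmas~\ref{lemma:sequence} and~\ref{lem:decreasing_func}) that establish, via an Abel-summation/induction argument on cdfs, that the probability of an independent random vector with parameter-monotone marginal cdfs lying in a lower polyhedron is monotone in the parameter. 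Your approach instead builds a single pathwise coupling through shared uniforms so that $\bm C_{u}^{\mathrm{ON}}$ and $-\bm C_{u}^{\mathrm{OFF}}$ are simultaneously nondecreasing in $u$, and then applies Proposition~\ref{ass:voltage_monotonicity} directly to conclude $\hat X_{u_1}\ge\hat X_{u_2}$ a.s. The coupling is more elementary---it dispenses with both lemmas and yields the stronger conclusion of stochastic dominance of $\hat X_u$ rather than just monotonicity of its mean---while the paper's cdf-based Lemma~\ref{lem:decreasing_func} is a reusable standalone statement that does not require constructing an explicit coupling. Your caveat that $\overline p_j,\overline q_j>0$ is needed is exactly the hypothesis the paper also uses implicitly when it asserts $g_j$ is a \emph{non-negative} linear combination of the $C_{u,l}$.
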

    The proof is given in Appendix~\ref{sec:proof_monotonicity}. While Theorem~\ref{thm:prob_monotonicity} justifies Assumption~\ref{ass:prob_monotonicity} for the approximation $\hat{\nu}_{u} (t+1)$, we also empirically validate that $\nu_u (t+1)$ is a monotonically decreasing function of $u$ in Fig.~\ref{fig:prob_empirical_estimation}. To create this plot, we generated $n_{\text{s}} = 10^6$ realizations of $X_{u} ( t+1 )$ for each of 101 uniformly spaced points $u$ from -1 to 1. 

\begin{figure}[t]
    \centering
    \includegraphics[width=0.7\linewidth]{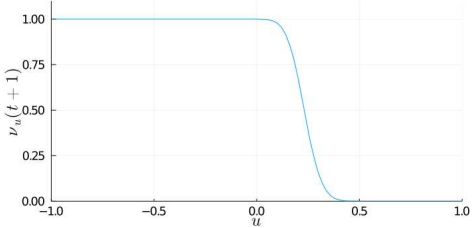}
    \vspace{-.3cm}
    \caption{Demonstration of the monotonicity of $\nu_{u} (t+1)$ with respect to $u$.}
    \label{fig:prob_empirical_estimation}
    \vspace{-.3cm}
\end{figure}
    
    
\section{Case Study} \label{sec:numerical_experiment}

We next present the result of a case study in which we compare the proposed approach with two benchmark approaches, a tracking controller benchmark and an Optimal Power Flow (OPF) benchmark. We first describe our simulation setup and detail the benchmark approaches. Then, we present our results. 

We use the 56-bus balanced distribution network from \cite{bolognani2015existence} where the nominal real and reactive power consumption at node $j$ are denoted by $p_{j}^{\mathrm{Ln}}$ and $q_{j}^{\mathrm{Ln}}$, respectively. We set the safe lower bound on the voltage to $\underline{v} = 0.95 \; \text{pu}$. TCL parameters are randomly sampled\footnote{Each parameter is sampled from uniform distributions with intervals: $\theta_a^i \in [ 29, 31 ]$ \textdegree{}C, $ c_{\text{th}}^{i} \in [ 1.5, 2.5 ] $kWh/\textdegree{}C,  $r_{\text{th}}^i = [ 1.2, 2.5 ]$\textdegree{}C/kW, $p_{\text{tr}}^{i} \in [ -18, -14 ]$ kW, $\zeta^i \in [ 2.3, 2.7 ], $ $\theta_s^{i} \in [ 20, 25 ]$\textdegree{}C,  $ \overline{\theta}^i - \underline{\theta}^i \in [1.5,2]$\textdegree{}C, and $\omega^i = \tan (\arccos(\phi^i))$, where $\phi^i \in [0.95,0.99]$.}, and the TCLs are distributed throughout the network so that the aggregate TCLs' nominal real power consumption at node $j$ is approximately $0.25 p_{j}^{\text{Ln}}$.
  For simplicity, we assume that the real and reactive power consumption of the non-participating loads at each node $P_{j}^{\text{L}} (t)$ and $Q_{j}^{\text{L}} (t)$ follow normal distributions $ \mathcal{N} ( \overline{p}_{j}^{\text{Ln}} (t), ( 0.15 p_{j}^{\text{Ln}})^2 )$ and $\mathcal{N} (  \overline{q}_{j}^{\text{Ln}} (t) , ( 0.15 q_{j}^{\text{Ln}} )^2 )$ truncated by the intervals $[ p_{j}^{\text{Lmin}}, p_{j}^{\text{Lmax}}] = [ -0.25 p_{j}^{\text{Ln}}, 0.675 p_{j}^{\text{Ln}}]$ and $ [q_{j}^{\text{Lmin}}, q_{j}^{\text{Lmax}}] = [ -0.25 q_{j}^{\text{Ln}}, 0.675 q_{j}^{\text{Ln}}]$, respectively. We conduct 2h simulations (13h-15h) and let $ \overline{p}_{j}^{\text{Ln}} (t)$ and $\overline{q}_{j}^{\text{Ln}} (t)$  linearly increase from $0.5$ to $0.65$ of their nominal values from 13.0h to 13.9h, stay constant from 13.9h to 14.1h, and linearly decrease to $0.5$ of their nominal values from 14.1h to 15.0h. The reference signal $p_{\text{ref}} (t)$ is a scaled and shifted 2h segment of a PJM RegD signal \cite{PJMrefsignal}. We use the desired safety probabilities $1 - \epsilon = 0.95$ and $0.98$ and the desired confidence level $ 1 - \beta = 0.999$. 

The aggregator obtains the estimates $\hat{w}_{j}^{\text{ON}} (t+1)$ and $\hat{w}_{j}^{\text{OFF}} ( t+1 )$ for each node leveraging an approximate model of the dynamics of the TCL aggregation. The model was developed in past work, e.g.,~\cite{mathieu2012state}, and so not detailed here. While we could identify different models for each node, here we use the same model for each node $j \in [n]$ and so $\hat{w}_{j}^{\text{ON}} (t+1)$ and $\hat{w}_{j}^{\text{OFF}} ( t+1 )$ are identical across nodes. Fig.~\ref{fig:Intswitchport} demonstrates the model's estimation performance, showing the actual and estimated fractions of TCLs outside of their dead-bands. Although the estimates do not perfectly track the actual values, they capture the overall trends.

\begin{figure}
    \centering
        \includegraphics[width=.48\columnwidth]{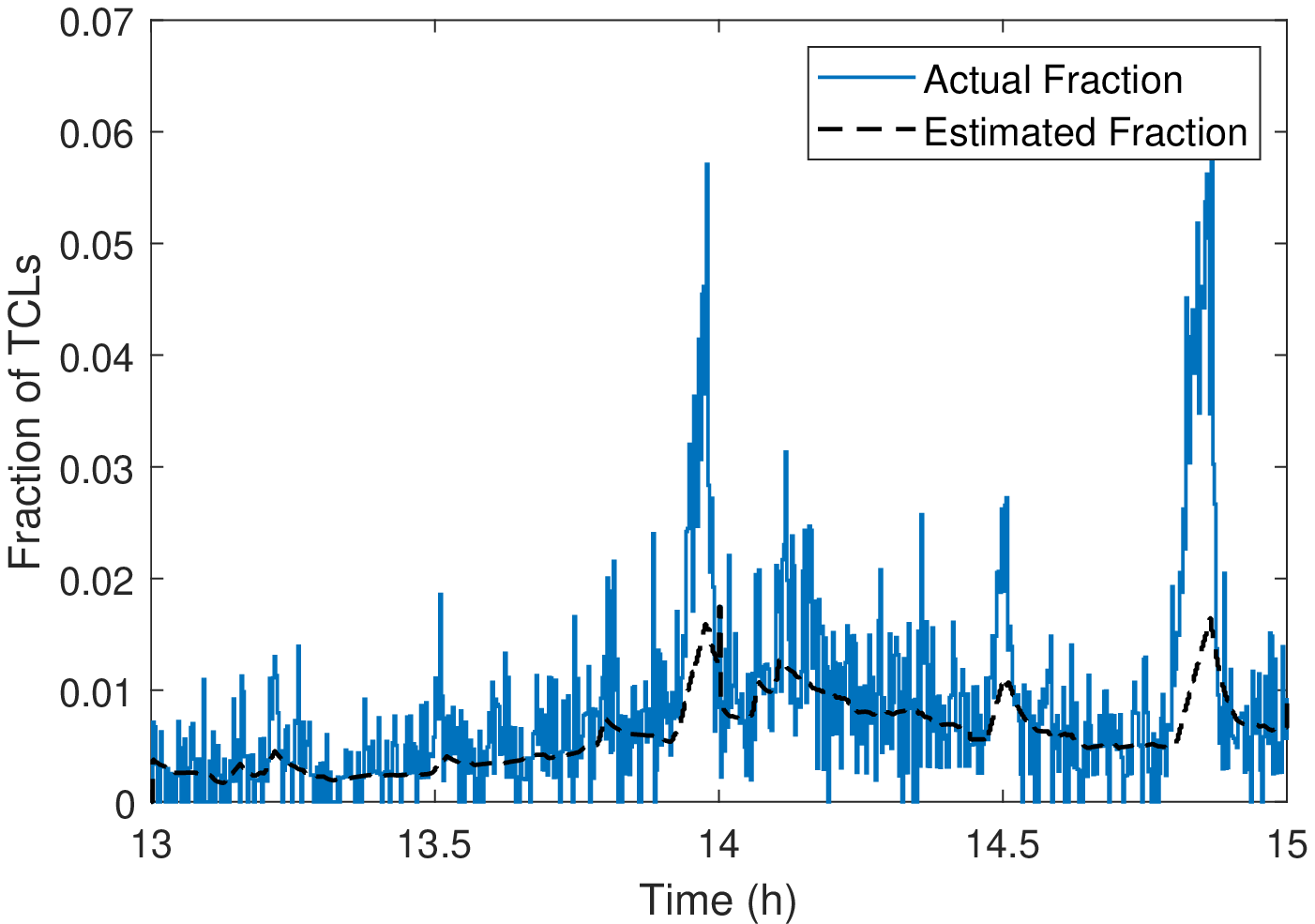}
        \includegraphics[width=.48\columnwidth]{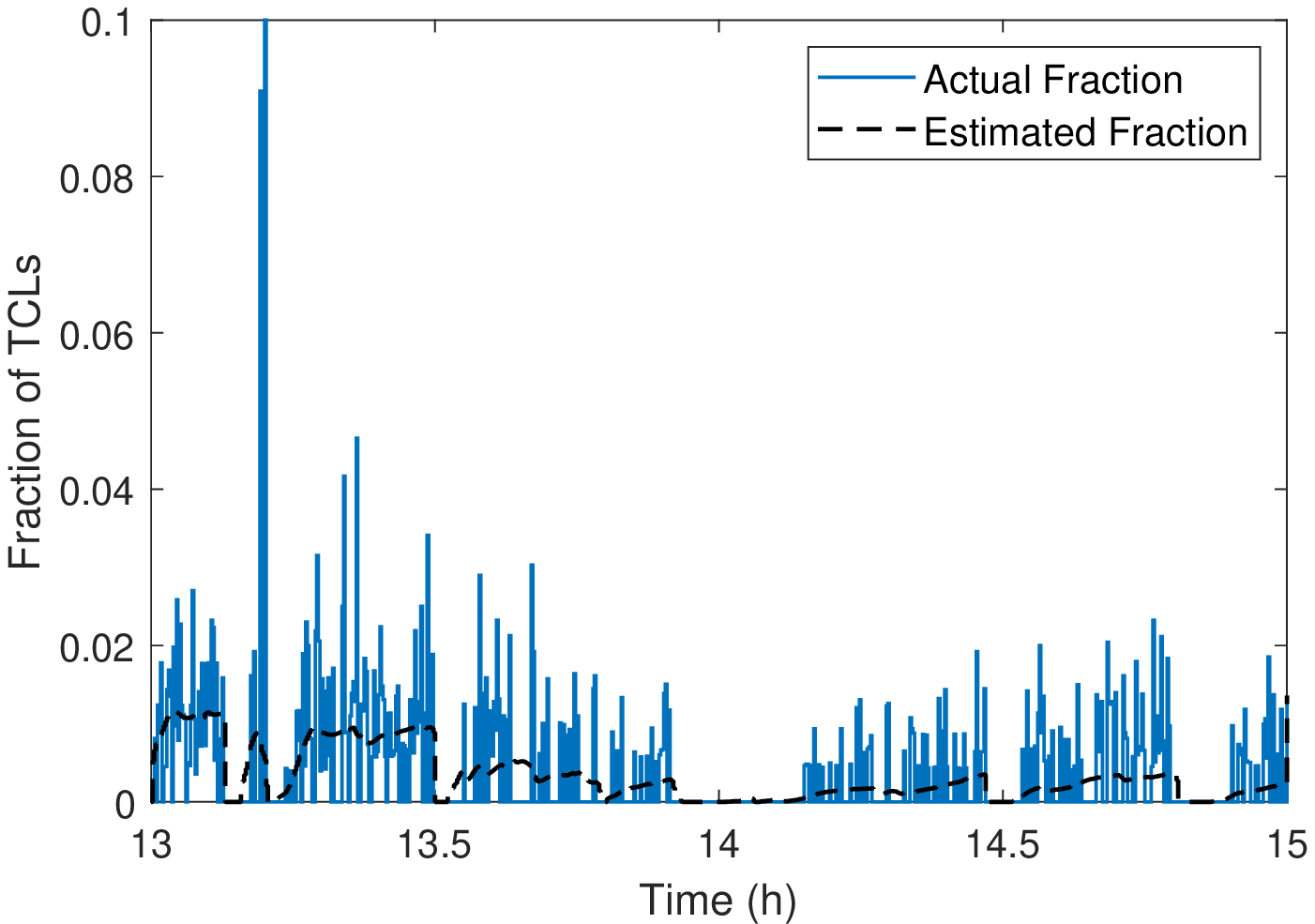}
        \vspace{-.3cm}
    \caption{Actual and estimated fractions of TCLs switched ON (left) and OFF (right) by their thermostats, in the proposed approach $(\epsilon =0.02)$.}
    \label{fig:Intswitchport}
    \vspace{-.5cm}
\end{figure}

The tracking controller benchmark does not take into account network safety. It chooses the optimal command $u_{\text{opt}} ( t )$ using \eqref{eqn:uopt} with $\mathcal U(t) = [-1, 1]$,
where $\mathbb{E} \left [ P_{\text{agg}} ( t ) \right ]$ is the expected aggregate power of the TCLs under $u (t) = u$, which is computed with the same approximate aggregate TCL model. 

The OPF benchmark approximately enforces network safety assuming linearized power flow. It solves the following mixed integer linear program at each time step to compute the optimal one-step ahead mode of each TCL,
\begin{subequations}
\label{eq:benchmark}
\begin{align}
    \min_{m^i} \; & | p_{\text{agg}} - p_{\text{ref}}  | \\
    \text{s.t.} \; & p_{\text{agg}} =  \sum_{i=1}^{n^{\text{TCL}}} p^{i} m^{i} \\
    & p_{j}^{\text{T}} =  \sum_{i \in \mathcal{I}_j} p^{i} m^{i}, \enspace  q_{j}^{\text{T}} =  \sum_{i \in \mathcal{I}_j} q^{i} m^{i}, \quad\quad \forall j \in [n]  \\
    & \text{TCL temperature dynamics } \eqref{eq:TCLtempdyns}, \quad \, \, \forall i \in [n^{\text{TCL}}] \\
    &  \theta^i \in [\underline{\theta^i},\overline{\theta^i}], \qquad \qquad \qquad \qquad \, \, \quad  \forall i \in [n^{\text{TCL}}] \\
     & \bm{v} = \bm{\Phi}_{p} (\bm{p}^{\text{T}}+\bm{p}^{\text{Lmax}}) + \bm{\Phi}_{q} (\bm{q}^{\text{T}}+\bm{q}^{\text{Lmax}}) + \bm{\Phi}_{c} \label{eq:linPF} \\
    & \underline{v} \leq \bm{v} \label{eq:OPFdetconst},
\end{align}
\end{subequations}
where $\mathcal{I}_j$ is the set of indices of TCLs connected to $j$ and \eqref{eq:linPF} is the linearized power flow developed in \cite{dall2017optimal}. The OPF benchmark is different from the proposed approach and optimal tracking controller in that it can observe the TCLs' internal temperatures and directly control the TCLs' modes. In contrast to the proposed approach, it has a deterministic constraint \eqref{eq:OPFdetconst} on network safety rather than a chance constraint.

\begin{figure*}
    \centering
    \includegraphics[width=\textwidth]{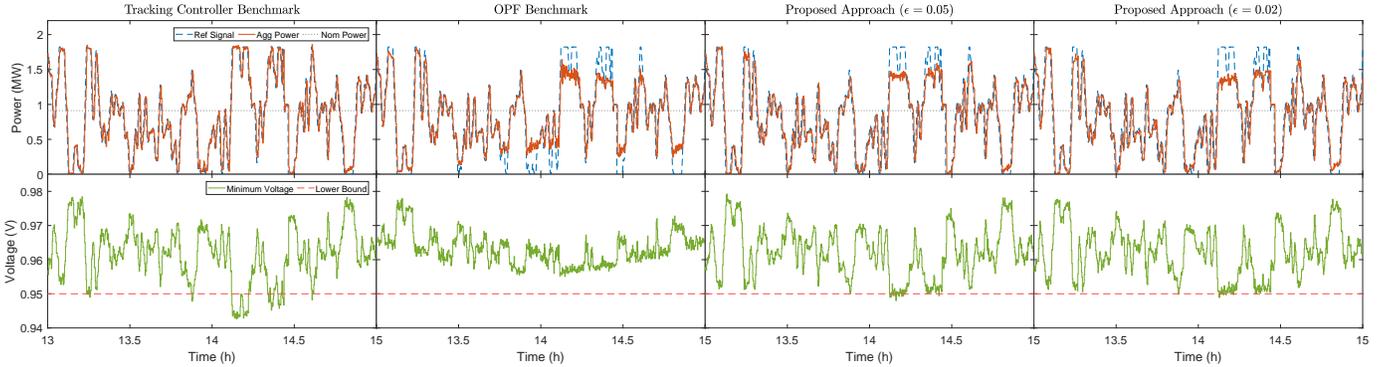}
      \vspace{-.8cm}
    \caption{The reference signal and the TCLs' aggregate power (top), and the minimum network voltage and the safe lower bound (bottom) for each algorithm.}
    \label{fig:ExpResult}
    \vspace{-.5cm}
\end{figure*}

Fig.~\ref{fig:ExpResult} illustrates the results of the comparison between the two benchmarks and our proposed approach with $\epsilon = 0.05$ and $0.02$. Table~\ref{tab:ExpResult} shows the root mean squared error (RMSE) of the aggregate power from the reference signal, along with the empirical safety probability computed as the fraction of time steps in which under-voltage violations (computed with the nonlinear power flow equations) do not happen. The tracking controller benchmark has the best tracking performance, but frequently causes under-voltage violations. This demonstrates the need to employ network-safe DER control strategies. In contrast, the OPF benchmark avoids under-voltage violations, but has the worst tracking performance, demonstrating that approaches that (approximately) enforce network safety will at times have poor balancing service performance. 

\begin{table}[t]
    \centering
    \caption{Tracking and Safety Performance of Each Algorithm}
    \vspace{-.1cm}
    \begin{tabular}{lcccc}
    \hline
    & Track Ctrl & OPF & \multicolumn{2}{c}{Proposed Approach}  \\
    & Benchmark & Benchmark & $\epsilon = 0.05$ & $\epsilon = 0.02$  \\
    \hline \hline
     RMSE (kW) & $ 77.05 $ 
      & $ 168.3 $ & $102.8$ & $118.8$ \\
      Safety Probability \hspace{-.2cm} & 0.908 & 1.00 & 0.981 & 0.986 \\
      \hline
    \end{tabular}
        \vspace{-.5cm}
        \label{tab:ExpResult}
\end{table} 

Our approach achieves a better trade-off between tracking performance and network safety; specifically, it achieves better tracking performance than the OPF benchmark and satisfies the chance constraint on network safety, resulting in fewer under-voltage violations than the tracking controller benchmark. As shown in Table~\ref{tab:ExpResult}, the empirical safety probabilities are over the target values $1-\epsilon$. The RMSE increases as $\epsilon$ decreases, which is expected since higher $1-\epsilon$ results in more conservative bounds on the input commands.

\section{Conclusion}
This paper proposed an approach to coordinate a collection of TCLs to provide balancing services while guaranteeing network safety with high probability. In particular, we proposed a constraint construction method that would allow the utility to constrain the input commands of an aggregator providing balancing services like frequency regulation. The approach imposes a chance constraint on network safety, wherein both the violation probability and confidence level are design parameters that can be selected by the utility. We used the bisection method to compute the largest possible constraint set, which provides the most flexibility to the aggregator. 
    
Future work will extend the proposed approach to incorporate different types of DERs, such as stationary batteries, electric vehicles, and curtailable solar photovoltaics, into the framework; we already have some preliminary work along this direction~\cite{jang2022data}.

\bibliographystyle{IEEEtran}
\bibliography{reference} 




\appendix

\subsection{Proof of Theorem~\ref{thm:confidence_interval_condition}}
\label{sec:thm1proof}
 \begin{proof}[\nopunct]
    By Theorem 4.1 in \cite{mitzenmacher2017probability}, the following inequality is derived from the Chernoff bound for any $0 <  \delta \leq \frac{1 - \nu}{\nu}$,
    \begin{equation} \label{eq:multiplicative_chernoff}
        \begin{aligned}
        \mathrm{Pr} ( M_{n_{\text{s}}} \geq (1+\delta) \nu ) & \leq \left ( \frac{1}{1+\delta} \right )^{(1+\delta) n_{\text{s}} \nu} e^{\delta n_{\text{s}} \nu } \\
        & = e^{ n_{\text{s}} \nu \left ( \delta - (1+\delta) \ln ( 1 + \delta ) \right ) }.
        \end{aligned}
    \end{equation}
  We substitute $c / \nu$, with $c \in [0, 1-\nu]$, for $\delta$ and obtain 
    \begin{equation} \label{eq:prob1} 
        \begin{aligned}
            & \mathrm{Pr} \left ( M_{n_{\text{s}}} - \nu \geq c \right ) \leq e^{ n_{\text{s}} \left ( c - \left ( \nu+c \right ) \ln \left ( 1 + \frac{c}{\nu} \right ) \right )} \\
            & \Longleftrightarrow \enspace \mathrm{Pr} \left ( \nu \geq M_{n_{\text{s}}} - c \right ) \geq 1 - e^{ n_{\text{s}} \left ( c- \left ( \nu +c \right ) \ln \left ( 1 + \frac{c}{\nu} \right ) \right )}.
        \end{aligned}
    \end{equation}
    Hence, $[\tilde{m}_{n_{\text{s}}}-c,1]$ is a confidence interval for $\nu$ with confidence level over $1 - e^{ n_{\text{s}} \left ( c- \left ( \nu +c \right ) \ln \left ( 1 + \frac{c}{\nu} \right ) \right )}$. Thus, if there exists $c > 0$ that satisfies $\tilde{m}_{n_{\text{s}}}-c \geq 1 - \epsilon$ and $1 - e^{ n_{\text{s}} \left ( c- \left ( \nu +c \right ) \ln \left ( 1 + \frac{c}{\nu} \right ) \right )} > 1 - \beta, $
    then $[1-\epsilon, 1]$ is a confidence interval for $\nu$ with confidence level over $1 - \beta$. Next, we show that such a $c$ exists. First, we derive a lower bound on $1 - e^{ n_{\text{s}} \left ( c- \left ( \nu +c \right ) \ln \left ( 1 + \frac{c}{\nu} \right ) \right )}$. Focusing on the exponent, observe that 
    \begin{equation} \label{eq:prob2}
    \begin{aligned}
        \frac{\partial}{\partial \nu} \left ( c - (\nu +c) \ln \left ( 1 + \frac{c}{\nu} \right ) \right ) = - \ln \left ( 1 + \frac{c}{\nu} \right ) + \frac{c}{\nu}. 
    \end{aligned}
    \end{equation}
    If we let $h_1 (x) := - \ln (1+x) + x$, the right side of \eqref{eq:prob2} is equal to $h_1 (c / \nu)$. From $h_1 (0) = 0$ and $\partial  h_1 (x)/\partial x = - 1/(1+x) + 1 \geq 0 \quad \forall x \in [0, \infty )$,
    we have $h_1(x) \geq 0$ for all  $x\in[0,\infty)$, which means $h_1 ( c/ \nu )$ is non-negative. Hence, the exponent is increasing with respect to $\nu$, and thus achieves its maximum at $\nu=1$. Therefore, 
    \begin{equation} \label{eq:prob3}
          \mathrm{Pr} \left ( \nu \geq M_{n_{\text{s}}} - c \right )  \geq 1 - e^{ n_{\text{s}} \left ( c- \left ( 1 +c \right ) \ln \left ( 1 + c \right ) \right )}.
    \end{equation}
    Since $\nu \leq 1$, \eqref{eq:prob3} implies that $[ \tilde{m}_{n_{\text{s}}} - c, 1]$ is a confidence interval for $ \nu $ with confidence level over $ 1 - e^{n_{\text{s}} \left ( c - ( 1 + c) \ln ( 1 +c) \right )}$.
    
    Now, suppose that \eqref{eq:theorem1_cond1}, \eqref{eq:theorem1_cond2} hold and define $h_2 (x) := x - (1+x) \ln (1+x)$; the exponent on the right side of \eqref{eq:prob3} is $n_{\text{s}} h_2(c)$. From $h_2(0) = 0$ and ${\partial h_2 (x)}/{\partial x}<0$ for all $x\in ( 0, \infty )$, we have
     $h_2(x)<0$ for all $x \in ( 0, \infty )$.
     Since $\tilde{m}_{n_{\text{s}}} - ( 1 - \epsilon)>0 $ by \eqref{eq:theorem1_cond1}, $\left ( \tilde{m}_{n_{\text{s}}}+ \epsilon - 1) - ( \tilde{m}_{n_{\text{s}}} + \epsilon \right ) \ln ( \tilde{m}_{n_{\text{s}}} + \epsilon) ) = h_2 ( \tilde{m}_{n_{\text{s}}}- (1-\epsilon))$ is negative. Also, substituting $c$ with $ \tilde{m}_{n_{\text{s}}} - (1- \epsilon) $, the right side of \eqref{eq:prob3} becomes
                 $1-e^{ n_{\text{s}} \left ( ( \tilde{m}_{n_{\text{s}}}+ \epsilon - 1)  -  ( \tilde{m}_{n_{\text{s}}} + \epsilon \right ) \ln ( \tilde{m}_{n_{\text{s}}} + \epsilon )  )  }$ which is less than $
                 1- e^{- \ln \left ( \frac{1}{\beta} \right ) } = 1-\beta,  $ per~\eqref{eq:theorem1_cond2}.
        Hence, $1 - e^{n_{\text{s}} \left ( c - ( 1 + c) \ln ( 1 +c) \right )} \geq 1 - \beta$ and, thus, the interval $ [1-\epsilon,1] = [ \tilde{m}_{n_{\text{s}}} -c,1]$ is a confidence interval for $\nu$ with confidence level over $ 1 - \beta$.
    \end{proof}
    
\subsection{Proof of Theorem~\ref{thm:prob_monotonicity} and supporting lemmas}\label{sec:proof_monotonicity}
We first introduce and prove Lemma~\ref{lemma:sequence}, which is required for the proof of Lemma~\ref{lem:decreasing_func}. Then, we prove Lemma~\ref{lem:decreasing_func}, which is used in the proof of Theorem~\ref{thm:prob_monotonicity}. Finally, we prove Theorem~\ref{thm:prob_monotonicity}.

\begin{lemma} \label{lemma:sequence}
     Suppose that $a_w (x), b_w (x) : \mathcal{X} \to \mathbb{R}^{+}$ are non-negative functions with parameter $w \in \mathbb{R}$, and $\{  \tilde{x}_1, \ldots, \tilde{x}_{N} \}$ ($\tilde{x}_1 \leq \ldots \leq \tilde{x}_{N}$) is a finite subset of the domain $\mathcal{X}$. Also, assume that the following two conditions hold: 1)~$\sum_{k = 1}^{j} a_{w} ( \tilde{x}_k ) $ is a decreasing function with respect to $w$ for any $j \in \{ 1 , \ldots, N \}$, and 2)~$b_w (x)$ is decreasing function with respect to both $x$ and $w$. Then, $ g (w) := \sum_{k = 1}^{N} a_w ( \tilde{x}_k ) b_w ( \tilde{x}_k ) $ is a decreasing function with respect to $w$. 
\end{lemma}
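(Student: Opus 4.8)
The plan is to decouple the two roles that the parameter $w$ plays in $g$ — it appears both in the weights $a_w(\tilde x_k)$ and in the factors $b_w(\tilde x_k)$ — and then handle each separately. Introduce the two‑parameter quantity $G(w_a,w_b) := \sum_{k=1}^{N} a_{w_a}(\tilde x_k)\, b_{w_b}(\tilde x_k)$, so that $g(w) = G(w,w)$. Fix $w_1 \le w_2$; I will establish the chain $g(w_1) = G(w_1,w_1) \ge G(w_1,w_2) \ge G(w_2,w_2) = g(w_2)$, which is exactly the desired monotonicity.

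The first inequality, $G(w_1,w_1) \ge G(w_1,w_2)$, is immediate: since $b_w(x)$ is decreasing in $w$ (condition 2), $b_{w_1}(\tilde x_k) \ge b_{w_2}(\tilde x_k)$ for every $k$, and multiplying by the non‑negative weights $a_{w_1}(\tilde x_k) \ge 0$ and summing gives the bound term by term. This step uses only non‑negativity and is not the obstacle.

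The second inequality, $G(w_1,w_2) \ge G(w_2,w_2)$, is the substantive one. Here the $b$‑factor is frozen at $w_2$, so it suffices to show that $w \mapsto \sum_{k=1}^{N} a_w(\tilde x_k)\, c_k$ is decreasing in $w$, where $c_k := b_{w_2}(\tilde x_k)$. The key structural fact is that $c_1 \ge c_2 \ge \cdots \ge c_N \ge 0$, because $b_{w_2}(\cdot)$ is decreasing in $x$ (condition 2) while $\tilde x_1 \le \cdots \le \tilde x_N$. I would then apply Abel summation (summation by parts): with $A_j(w) := \sum_{k=1}^{j} a_w(\tilde x_k)$ and $A_0(w) := 0$ one obtains
\[
\sum_{k=1}^{N} a_w(\tilde x_k)\, c_k \;=\; \sum_{k=1}^{N-1} A_k(w)\,(c_k - c_{k+1}) \;+\; A_N(w)\, c_N .
\]
By condition 1 every $A_j(\cdot)$ is a decreasing function of $w$, while the coefficients $c_k - c_{k+1} \ge 0$ and $c_N \ge 0$ are non‑negative constants; hence the right‑hand side is a non‑negative combination of decreasing functions of $w$, therefore itself decreasing. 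Evaluating at $w_1 \le w_2$ gives $G(w_1,w_2) \ge G(w_2,w_2)$, and chaining the two inequalities yields $g(w_1) \ge g(w_2)$.

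The main obstacle is conceptual rather than computational: one must notice that the simultaneous $w$‑dependence of $a_w$ and $b_w$ has to be separated (via the auxiliary $G(w_a,w_b)$) before any summation‑by‑parts argument can be invoked, and that once the $b$‑factor is frozen, condition 1 is precisely tailored so that the Abel‑summation coefficients multiply decreasing functions. The degenerate case $N = 1$ (where the partial‑summation term vanishes and the claim reduces to condition 1 scaled by the non‑negative constant $c_1$) should be noted but is trivial.
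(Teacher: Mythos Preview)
Your proposal is correct and follows essentially the same route as the paper: both first pass from $b_{w_1}$ to $b_{w_2}$ using condition~2 and non-negativity of the $a$-weights, and then use Abel summation together with condition~1 (the partial sums $A_j$ are decreasing in $w$, the differences $c_k-c_{k+1}\ge 0$) to pass from $a_{w_1}$ to $a_{w_2}$. Your explicit two-parameter auxiliary $G(w_a,w_b)$ makes the decoupling slightly more transparent than the paper's presentation, but the underlying argument is identical.
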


\begin{proof}
 We prove the lemma by showing that, for $w \leq \overline{w}$,  $\sum_{k = 1}^{j} a_{w} ( \tilde{x}_k ) b_{w}( \tilde{x}_k ) \geq \sum_{k =1}^{j} a_{\overline{w}} ( \tilde{x}_k ) b_{\overline{w}}( \tilde{x}_k ) $ for any $j \in [N]$ and $ w_1 , w_2 \in \mathbb{R}$ as follows:
\begin{subequations}
\allowdisplaybreaks
\label{eq:lem1_1}
\begin{align}
        & \sum_{k=1}^{j} a_{w}   ( \tilde{x}_{k} ) b_{w} ( \tilde{x}_{k} )  \geq \sum_{k=1}^{j} a_{w} ( \tilde{x}_{k} ) b_{\overline{w}} ( \tilde{x}_{k} ) \label{subeq:lem1_11} \\
         & = b_{\overline{w}} ( \tilde{x}_{j} )  \sum_{k=1}^{j} a_{w} ( \tilde{x}_{k} ) + \sum_{k=1}^{j-1}  \Delta b_{\overline{w}} (\tilde{x}_k)
         \sum_{l=1}^{k} a_{w} ( \tilde{x}_{l} ) \label{subeq:lem1_12} \\
        & \geq b_{\overline{w}} ( \tilde{x}_{j} )  \sum_{k=1}^{j}  a_{\overline{w}} (  \tilde{x}_{k} )  + \sum_{k=1}^{j-1}  \Delta b_{\overline{w}} (\tilde{x}_k)   \sum_{l=1}^{k} a_{\overline{w}} ( \tilde{x}_{l} ) \label{subeq:lem1_13} \\
         & = \sum_{k=1}^{j}  a_{\overline{w}} ( \tilde{x}_{k} )  b_{\overline{w}} (   \tilde{x}_{k}  ) 
\end{align}
\end{subequations}
where $\Delta b_{\overline{w}} (\tilde{x}_k) := ( b_{\overline{w}}  ( \tilde{x}_{k}  )  - b_{\overline{w}} ( \tilde{x}_{k+1} ) )$, \eqref{subeq:lem1_11} holds by condition~2 and \eqref{subeq:lem1_13} holds by condition~1.
\end{proof}

\begin{lemma} \label{lem:decreasing_func}
     Suppose that $Y_{w}^{(j)} $ ($j \in [ n ]$) is a discrete random variable with the finite sample space $ \mathcal{Y}^{(j)} = \{ \acute{y}_{1}^{j} ,\ldots, \acute{y}_{ \kappa_j }^{j} \}$ ($\acute{y}_{1}^{j} \leq \ldots \leq \acute{y}_{\kappa_j }^{j}$) with parameter $w \in \mathbb{R}$ having the following properties: 1) $ Y_{w}^{(1)}, \cdots, Y_{w}^{(n)} $ are independent of each other, and 2) the cdf $F_{Y^{(j)}} (y ; w)$ of $Y_{w}^{(j)}$ is a decreasing function with respect to $w$ for any $y \in \mathcal{Y}^{(j)}$.
     Then, for any $\overline{z}^{(i)} \in \mathbb{R}$ ($i \in [ n_{c} ]$) and non-negative coefficients $a_{ij} \in \mathbb{R}^{+}$, $\mathrm{Pr} \left ( \bigwedge_{i=1}^{ n_c }  \left ( \sum_{j=1}^n a_{ij} Y_w^{(j)} \leq \overline{z}^{(i)} \right ) \right ) $ monotonically decreases as $w$ increases.
\end{lemma}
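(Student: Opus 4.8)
The plan is to reduce the multi-constraint probability to a product of single-variable expectations by exploiting the independence of the $Y_w^{(j)}$, and then to apply Lemma~\ref{lemma:sequence} repeatedly. First I would rewrite the joint probability as an iterated sum over the values $\acute y_k^j$ of each $Y_w^{(j)}$; using property~1 (mutual independence), the joint pmf factorizes as $\prod_{j=1}^n \mathrm{Pr}(Y_w^{(j)} = \acute y_{k_j}^j)$. The key structural observation is that the event $\bigwedge_{i=1}^{n_c}(\sum_j a_{ij} Y_w^{(j)} \le \overline z^{(i)})$, for fixed values of $Y_w^{(1)},\dots,Y_w^{(n-1)}$, is a downward-closed condition on $Y_w^{(n)}$ (since $a_{in}\ge 0$): it holds iff $Y_w^{(n)} \le \tau$ for some threshold $\tau = \tau(\acute y^1_{k_1},\dots,\acute y^{n-1}_{k_{n-1}})$ that is \emph{nonincreasing} in each of the other variables. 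Hence summing out $Y_w^{(n)}$ turns the inner sum into $F_{Y^{(n)}}(\tau; w)$, a quantity that (by property~2) is decreasing in $w$ and (by the threshold's monotonicity) decreasing in each remaining variable.

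Next I would set up an induction on the number of constraints coupled through $Y_w^{(n)}$, or more cleanly an induction on $n$: the outer structure after summing out $Y_w^{(n)}$ is $\sum_{k_{n-1}} a_w(\acute y^{n-1}_{k_{n-1}})\, b_w(\acute y^{n-1}_{k_{n-1}})$-type expression, where $a_w(\cdot) = \mathrm{Pr}(Y_w^{(n-1)} = \cdot)$ plays the role of the ``mass'' and $b_w(\cdot)$ absorbs the conditional probability $F_{Y^{(n)}}(\tau;w)$ together with the recursively-obtained factor for variables $1,\dots,n-2$. To invoke Lemma~\ref{lemma:sequence} I must check its two hypotheses: condition~1 requires $\sum_{l\le k}\mathrm{Pr}(Y_w^{(n-1)}=\acute y^{n-1}_l) = F_{Y^{(n-1)}}(\acute y^{n-1}_k; w)$ to be decreasing in $w$, which is exactly property~2; condition~2 requires the accumulated factor $b_w$ to be decreasing in both its argument and in $w$, which is where the induction hypothesis and the threshold-monotonicity feed in. The base case $n=1$ is immediate: $\mathrm{Pr}(\bigwedge_i a_{i1}Y_w^{(1)} \le \overline z^{(i)}) = \mathrm{Pr}(Y_w^{(1)} \le \min_i \overline z^{(i)}/a_{i1}) = F_{Y^{(1)}}(\cdot;w)$, decreasing in $w$ by property~2 (with care taken for $a_{i1}=0$ or $<0$—but $a_{ij}\ge 0$ is assumed).

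The main obstacle I anticipate is bookkeeping the dependence of the intermediate function $b_w$ on \emph{both} the remaining decision variable and the parameter $w$ simultaneously, and verifying it remains monotone decreasing in each after each elimination step — products of decreasing nonnegative functions are decreasing, but one must confirm nonnegativity and that the threshold $\tau$ genuinely depends monotonically (not just that it is well-defined) on the conditioning values. A secondary subtlety is handling the case where, for certain conditioning values, the inner constraint is already violated or already satisfied regardless of $Y_w^{(n)}$ (i.e., $\tau$ is $-\infty$ or $+\infty$): in those cases the conditional probability is the constant $0$ or $1$, which are trivially (weakly) decreasing, so the argument still goes through, but this case split should be stated explicitly. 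Once these monotonicity checks are in place, a single application of Lemma~\ref{lemma:sequence} per elimination step, chained through the induction, yields that $\mathrm{Pr}\big(\bigwedge_{i=1}^{n_c}(\sum_{j=1}^n a_{ij} Y_w^{(j)} \le \overline z^{(i)})\big)$ is decreasing in $w$, completing the proof.
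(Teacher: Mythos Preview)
Your proposal is correct and follows essentially the same route as the paper: induction on $n$, eliminating one coordinate per step via Lemma~\ref{lemma:sequence}, with condition~1 supplied by the cdf hypothesis and condition~2 by the inductive hypothesis plus downward-closedness. The paper streamlines the bookkeeping by strengthening the inductive statement to arbitrary ``lower polyhedra'' (downward-closed sets) and conditioning directly on one coordinate, so that the slice $\mathcal{P}_k^-(y_{k+1})$ is again a lower polyhedron and the induction hypothesis applies verbatim; this sidesteps your preliminary step of summing out $Y_w^{(n)}$ and tracking thresholds $\tau$, but the two arguments are equivalent.
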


\begin{proof}
    Let $\bm{Y}_{w} = ( Y_{w}^{(1)}, \ldots, Y_{w}^{(n)} )^\top $ be a multivariate random variable with elements $Y_{w}^{(j)}$ and $\mathcal{P} = \{ \bm{y} \; | \; A \bm{y} \leq \overline{\bm{z}} \}$ be a polyhedron with elements $a_{ij}$. Then, 
     \begin{equation*}
         \mathrm{Pr} \left ( \bigwedge_{i=1}^{ n_c }  \left ( \sum_{j=1}^n a_{ij} Y_w^{(j)} \leq \overline{z}^{(i)} \right ) \right ) = \mathrm{Pr} \left ( \bm{Y}_{w} \in \mathcal{P}  \right ).
     \end{equation*}
 Note that $\mathcal{P}$ is a lower polyhedron in $\Pi_{j=1}^{n} [ \acute{y}_{1}^{j}, \acute{y}_{\kappa_j}^{j} ] $; if $\bm{y} \in \mathcal{P}$, then $\bm{y}' \in \mathcal{P}$ also holds for any $\bm{y}' \leq \bm{y}$. Thus, it is sufficient to show that $\mathrm{Pr} \left ( \bm{Y}_{w_{1}} \in \mathcal{P}' \right ) \geq \mathrm{Pr} \left ( \bm{Y}_{w_{2}} \in \mathcal{P}' \right ) \, \forall \, w_1 \geq w_2$ and any lower polyhedron $\mathcal{P}'$, which we do as follows:
 \begin{enumerate}
     \item Let $n=1$ and $\mathcal{P}_{1}' \subset [\acute{y}_{1}^{1} , \acute{y}_{\kappa_1 }^{1}]$ be a 1-dimensional lower polyhedron. Then, there exists $\overline{y}$ such that $\mathcal{P}_{1}' = [ \acute{y}_{1}^{1}, \overline{y}] $, and 
             $ \mathrm{Pr} ( Y_{w_1}^{(1)} \in \mathcal{P}_{1}' ) = F_{Y^{(1)}}  ( \overline{y}  ; w_1)  \geq F_{Y^{(1)}} ( \overline{y} ; w_2 ) =  \mathrm{Pr} ( Y_{w_2}^{(1)} \in \mathcal{P}_{1}' ), $
     which proves the statement for $n=1$.
    \item Let $n=k$ and suppose $\mathrm{Pr} ( \bm{Y}_{w_1}^{(1:k)} \in \mathcal{P}_{k}' ) \geq \mathrm{Pr} ( \bm{Y}_{w_2}^{(1:k)} \in \mathcal{P}_{k}' )  $ holds $\forall \, w_1 \geq w_2$ and for any $k$-dimensional lower polyhedron $\mathcal{P}_{k}' \subset \Pi_{j=1}^{k}  [ \acute{y}_{1}^{j}, \acute{y}_{\kappa_j}^{j} ] $. Define
    $\mathcal{P}_{k}^{-} ( y_{k+1} ) = \{ (y_{1},\ldots,y_{k} )^\top \; | \;  (y_{1},\ldots,y_{k}, y_{k+1} )^\top \in \mathcal{P}_{k+1}' \}$. 
    Then, $\mathcal{P}_{k}^{-} ( y_{k+1} ) $ is a lower polyhedron for any $y_{k+1} \in [ \acute{y}_{1}^{k+1}, \acute{y}_{\kappa_{k+1}}^{k+1} ] $. 
    Therefore, $\mathrm{Pr} (  \bm{Y}_{w_1}^{(1:k+1)} \in \mathcal{P}_{k+1}' ) = \sum_{j = 1}^{\kappa_{k+1}} \mathrm{Pr} ( Y_{w_1}^{(k+1)} = \acute{y}_{j}^{k+1} ) \mathrm{Pr} ( \bm{Y}_{w_1}^{(1:k)} \in \mathcal{P}_{k}^{-} ( \acute{y}_{j}^{k+1}  ) )$ for any $k+1$-dimensional lower  polyhedron $\mathcal{P}_{k+1}' \subset \Pi_{j=1}^{k+1}  [ \acute{y}_{1}^{j}, \acute{y}_{\kappa_j}^{j} ] $. This is greater than or equal to $
            \sum_{j = 1}^{\kappa_{k+1}} \mathrm{Pr} ( Y_{w_2}^{(k+1)} = \acute{y}_{j}^{k+1} ) \mathrm{Pr} ( \bm{Y}_{w_2}^{(1:k)} \in \mathcal{P}_{k}^{-} ( \acute{y}_{j}^{k+1}  ) ) $ by Lemma~\ref{lemma:sequence}, which in turn equals $
            \mathrm{Pr} ( \bm{Y}_{w_2}^{(1:k+1)} \in \mathcal{P}_{k+1}' )$. This proves the statement  for $n=k+1$.
 \end{enumerate}
Therefore, by mathematical induction, $\mathrm{Pr} \left ( \bm{Y}_{w_{1}} \in \mathcal{P}' \right ) \geq \mathrm{Pr} \left ( \bm{Y}_{w_{2}} \in \mathcal{P}' \right )$ holds for any lower polyhedron $\mathcal{P}'$.
\end{proof}

\begin{proof} [Proof of Theorem~\ref{thm:prob_monotonicity}]
    From \eqref{eq:prop1_eq1}, we obtain
    \begin{equation}
        \begin{aligned}
        & \hat{V}_{u,j}^2 (t+1) = \hat{f}_{v_j}^{2} (\bm{P}_u (t+1), \bm{Q}_{u} (t+1) , v_0) \\
        & =  v_0^2 - 2 \sum_{k \in a(j) } \bigg ( r_k \hat{f}_{ p_{k}^{\text{b}} } ( \bm{P}_{u} (t+1), \bm{Q}_{u} (t+1), v_0 )  \\
            & \qquad \qquad \qquad \qquad + x_k \hat{f}_{q_{k}^{\text{b}}} ( \bm{P}_{u} (t+1), \bm{Q}_{u} (t+1), v_0 ) \bigg ) \\
        & =  v_0^2 - 2 \sum_{k \in a(j) }  \sum_{l \in d(k) } \left ( r_k  P_{u,l} (t+1) + x_k Q_{u,l} (t+1) \right ).
        \end{aligned}
        \label{eq:v2}
    \end{equation}
    Substituting $P_{u,l} (t+1)$ with $P_{l}^{\text{L}} (t+1) + \overline{p}_{l} N_{u,l}^{\text{ON}} (t+1)$, $Q_{u,l} (t+1)$ with $Q_{l}^{\text{L}} (t+1) + \overline{q}_{l} N_{u,l}^{\text{ON}} (t+1)$, $N_{u,l}^{\text{ON}} (t+1)$ with the right side of \eqref{eq:nextnOn}, and leveraging \eqref{eq:v2} we obtain
    \begin{equation*}
        \begin{aligned}
             \hat{V}_{u,j} (t+1) \geq \underline{v} &\Longleftrightarrow \hat{V}_{u,j}^2 (t+1) \geq \underline{v}^2\\
            & \Longleftrightarrow g_j ( \bm{C}_{u} (t+1)) \leq h_j (\bm{R} ),
        \end{aligned}
    \end{equation*}
    where vector $\bm{R} := (\bm{N}^{\text{ON}} (t)^\top, \bm{P}^{\text{L}} (t+1)^\top,\bm{Q}^{\text{L}} (t+1)^\top, \bm{S}^{\text{ON}} (t+1)^\top ,\bm{S}^{\text{OFF}} (t+1)^\top )^\top$ collects random variables, $\bm{C}_{u} (t+1) := \bm{C}_{u}^{\text{ON}} (t+1) - \bm{C}_{u}^{\text{OFF}} (t+1)$ is the net number of TCL OFF to ON switches by the aggregator's command, and the functions $g_j$ and $h_j$ are 
    \begin{equation*}
        \begin{aligned}
            & g_j \left ( \bm{C}_{u} (t+1) \right ) :=  2 \sum_{k \in a (j)} \sum_{l \in d(k)} (r_k \overline{p}_l + x_k \overline{q}_{l}) C_{u,l} (t+1), \\
            & h_j (\bm{R}) :=  v_0^2 - \underline{v}^2 - 2 \sum_{k \in a(j) } \sum_{l \in d(k) }  \Big ( r_k  P_{l}^{\text{L}} (t+1)+  x_k Q_{l}^{\text{L}} (t+1) \\
            & \qquad  + (r_k \overline{p}_l + x_k \overline{q}_{l} ) \left (  N_{l}^{\text{ON}} (t)  +  S_{l}^{\text{ON}} (t+1) - S_{l}^{\text{OFF}} (t+1) \right ) \Big ).
        \end{aligned}
    \end{equation*}
    Note that $g_j$ is a non-negative linear combination of $C_{u,l} (t+1)$ for all $j \in [n]$, i.e.,  there exist $a_{jl} \geq 0$ for any $j,l \in [n]$ such that $g_j ( \bm{C}_{u} (t+1) )$ is equal to $\sum_{l=1}^n a_{jl} C_{u,l} (t+1) $.
    
    Let $\mathcal{R}$ be the sample space of $\bm{R}$ and $f_{\bm{R}}$ be the joint probability density function of $\bm{R}$. Then, we have
   \begin{equation} \label{eq:thm3proof_1}
    \begin{aligned}
        & \hat{\nu}_{u} (t+1) = \mathrm{Pr} \left ( \bigwedge_{j=1}^{n} \left ( \hat{V}_{u,j} (t+1) \geq \underline{v} \right )  \right ) \\
        & = \int_{\tilde{r} \in \mathcal{R}}\mathrm{Pr} \left ( \bigwedge_{j=1}^{n} \left ( g_j ( \bm{C}_{u} (t+1)) \leq h_j ( \tilde{r}) \right ) \bigg | \bm{R} = \tilde{r} \right ) f_{\bm{R}} (\tilde{r}) d \tilde{r}.
    \end{aligned}
    \end{equation}
    
   For any realization, $\tilde{r} := (\tilde{\bm{n}}^{\text{ON}} (t)^\top,\tilde{\bm{p}}^{\text{L}} (t+1)^\top , \tilde{\bm{q}}^{\text{L}} (t+1)^\top, \tilde{\bm{s}}^{\text{ON}} (t+1)^\top, \tilde{\bm{s}}^{\text{OFF}}
    (t+1)^\top )^\top \in \mathcal{R}$,  $C_{u,l} (t+1) = C_{u}^{\text{ON}} (t+1) $ when $u\geq 0$, and $C_{u,l} ( t+1) = -C_{u}^{\text{OFF}} ( t+1)$ when $u < 0$. Thus, by \eqref{eq:numofswitchTCLs}, 
    the conditional cdf of $C_{u,l} ( t+1)$ is computed as $\mathrm{Pr} ( C_{u,l} (t+1) \leq k | \bm{R} = \tilde{r} )= \mathcal{F}_{\text{B}} (k ; \tilde{n}_{l}^{\text{OFF}} (t) - \tilde{s}_{l}^{\text{ON}} (t+1), u)$ when $u \geq 0$, and $\mathrm{Pr} ( C_{u,l} (t+1) \leq k | \bm{R} = \tilde{r} ) =  1 - \mathcal{F}_{\text{B}} (-k ; \tilde{n}_{l}^{\text{ON}}(t) - \tilde{s}_{l}^{\text{OFF}}(t+1) , -u)$ when $u<0$. In addition, from \cite{wadsworth1961introduction}, the cdf of a binomial random variable $\mathcal{B} (n;\nu)$ is
    \begin{equation*}
        \mathcal{F}_{\text{B}} (k ; n,\nu) = (n-k) \binom{n}{k} \int_{0}^{1-\nu} t^{n-k-1} (1-t)^k dt,
    \end{equation*}
    which is a monotonically decreasing function with respect to $\nu$. Thus, $\mathrm{Pr} ( C_{u,l} (t+1) \leq k | \bm{R} = \tilde{r} )$  monotonically decreases as $u$ increases, and $ C_{u,1} (t+1)|\tilde{r}, \ldots,  C_{u,n}(t+1)|\tilde{r} $ for any $\tilde{r} \in \mathcal{R}$ satisfies the conditions on the random variables in Lemma~\ref{lem:decreasing_func}. Thus,  
    $
        \mathrm{Pr} \left ( \bigwedge_{j=1}^{n} \left ( g_j ( \bm{C}_{u} (t+1)) \leq h_j ( \tilde{r} ) \bigg | \bm{R} = \tilde{r} \right ) \right )
    $
    is a decreasing function with respect to $u$. Therefore, by \eqref{eq:thm3proof_1},  $\hat{\nu}_{u} (t+1)$ is also a decreasing function with respect to $u$.
\end{proof}

\end{document}